\definecolor{Gray}{gray}{0.9}
\newtheorem{lemma}{Lemma}
\newtheorem{theorem}{Theorem}
\newtheorem{corollary}{Corollary}
\begin{document}

\title{Random Access Analysis for Massive IoT Networks under A New Spatio-Temporal Model: A Stochastic Geometry Approach}


\author{Nan Jiang, Yansha Deng, Xin Kang, and Arumugam Nallanathan
\\

\thanks{N. Jiang, Y. Deng and A. Nallanathan are with Department of Informatics, King's College London, London, UK 
(e-mail:\{nan.3.jiang, yansha.deng, arumugam.nallanathan\}@kcl.ac.uk). (Corresponding author: Yansha Deng (e-mail:yansha.deng@kcl.ac.uk).)}
\thanks{X. Kang is with National Key Laboratory of Science and Technology on communications,  University of Electronic Science and Technology of China, Chengdu, 611731, Sichuan, China (e-mail: kangxin83@gmail.com).}

}

\maketitle

\vspace*{-2.0cm}
\begin{abstract}
\vspace*{-0.2cm}
Massive Internet of Things (mIoT) has provided an auspicious opportunity to build powerful and ubiquitous connections that faces a plethora of new challenges, where cellular networks are potential solutions due to their high scalability, reliability, and efficiency. The Random Access CHannel (RACH) procedure is the first step of connection establishment between IoT devices and Base Stations (BSs) in the cellular-based mIoT network, where modeling the interactions between static properties of physical layer network and dynamic properties of queue evolving in each IoT device are challenging. To tackle this, we provide a novel traffic-aware spatio-temporal model to analyze RACH in cellular-based mIoT networks, where the physical layer network is modeled and analyzed based on stochastic geometry in the spatial domain, and the queue evolution is analyzed based on probability theory in the time domain. For performance evaluation, we derive the exact expressions for the preamble transmission success probabilities of a randomly chosen IoT device with different RACH schemes in each time slot, which offer insights into effectiveness of each RACH scheme. Our derived analytical results are verified by the realistic simulations capturing the evolution of packets in each IoT device. This mathematical model and analytical framework can be applied to evaluate the performance of other types of RACH schemes in the cellular-based networks by simply integrating its preamble transmission principle.


\end{abstract}

\vspace*{-0.5cm}
\section{INTRODUCTION}
\vspace*{-0.2cm}


Massive Internet of Things (mIoT) is deemed to connect billions of miscellaneous mobile devices or IoT devices that empowers individuals and industries to achieve their full potential. A plethora of new applications, such as autonomous driving, remote health care, smart-homes, smart-grids, and etc, are being innovated via mIoT, in which ubiquitous connectivities among massive IoT devices are operated fully automatedly without human intervention. The successful operation of these IoT applications faces various challenges, among them providing wireless access for the tremendous number of IoT devices has been considered to be the main problem. This issue has been regarded as one of key differences between mIoT and human-to-human (H2H) wireless communication networks, such that the conventional H2H communication architecture needs to be adjusted to support the mIoT networks.



Previously, cellular network (e.g, Long Term Evolution (LTE)) and short-range transmission technologies (e.g, ZigBee, Bluetooth) were considered as potential solutions to support mIoT networks, however none of them can achieve all wide coverage, low power consumption and supporting massive IoT devices at the same time \cite{zanella2014internet,gubbi2013internet,hasan2013random,laya2014random}. 
To solve this, Low-Power Wide Area Networks (LPWANs) is proposed as an alternative solution for mIoT networks that enables the operation in the unlicensed band (e.g, LoRa, Sigfox) and licensed band (e.g, extended coverage GSM-IoT, enhanced machine type communication, and narrow band IoT (NB-IoT)). According to the Third Generation Partnership Project (3GPP), the IoT technologies are suggested to be developed based on the existing cellular infrastructure, due to its low additional hardware deployment cost as well as high-level of security by operating on the licensed band \cite{3GPP2011Study,hasan2013random,laya2014random,Gotsis2012M2M,Ksentini2012Cellular,nokia2016white,ericsson2016white}. 

In the cellular-based mIoT network, connections between IoT device with BS are provided by incorporating these IoT devices in existing cellular networks directly or via IoT gateways. In this network, the number of IoT devices
is expected to raise up to more than thirty thousands per cell and such IoT devices may request access simultaneously for their small size data packets uplink transmission \cite{3GPP1046632010RACH,3GPP1037422010RACH,3GPP2011Study}. As such improving the access mechanisms of current cellular systems is one of key challenges for the cellular-based mIoT network \cite{3GPP2011Study,Ksentini2012Cellular,Gotsis2012M2M,biral2015challenges,laya2014random,hasan2013random}. In LTE, a device performs Random Access CHannel (RACH) procedure when it needs to establish or re-establish a data connection with its associated BS, and the first step of RACH is that the device transmits a preamble via physical random access channel (PRACH) \cite{LTE2013dahlman}. Two ways exist for accessing to the network: 1) the contention-free RACH for delayed-constrained access requests (e.g, handover), where the BS distributes one of the reserved dedicated preamble to a device, and then the device uses its dedicated preamble to initiate a contention-free RACH; 2) the contention-based RACH for delay-tolerant access requests (e.g, data transmission), where an IoT device randomly chooses a preamble from non-dedicated preambles to transmit to its associated BS \cite{LTE2013dahlman}. Generally, the contention-based RACH is much more sensitive to IoT traffic \cite{biral2015challenges,hasan2013random,laya2014random}, such that most works have analyzed its scalability characteristics in supporting massive concurrent access requests \cite{alavikia2016multiple,gotsis2012evolution,leyva2016performance,lin2016estimation,zheng2012radio,dama2016novel,wali2017optimal}.

The contention-based RACH has been widely studied in the conventional LTE networks, where the most critical point of this issue concerns modeling and analyzing time-varying queues and RACH schemes in MAC layer \cite{zhou2008efficient,wei2015modeling}. Recently, a number of studies have been launched to discuss whether the contention-based RACH of LTE is suitable for mIoT, and how to evolve cellular systems to provide efficient access for mIoT networks \cite{laya2014random,lin2016estimation,dama2016novel,wali2017optimal}. In \cite{lin2016estimation}, the authors developed a MAC-level model for the four-step RACH procedure to analyze and compare the baseline scheme and the dynamic back-off scheme. In \cite{dama2016novel}, a novel Access Class Barring (ACB) scheme is proposed with a constant association rate. In \cite{wali2017optimal}, the authors devolop analytical models for the spatial-randomization ACB scheme and the time-spatial randomization, as well as compare them with the 3GPP specified eNodeB-employed time-randomization ACB scheme. However, in \cite{zhou2008efficient,wei2015modeling,lin2016estimation,dama2016novel,wali2017optimal}, the collision events are considered as the main outage condition, and the preamble transmission failure impacted by the physical channel propagation characteristics is simplified. Generally speaking, in the large-scale cellular-based mIoT network, the physical layer characteristics can strongly influence the performance of RACH success, due to that the received signal-to-interference-plus-noise ratio (SINR) at the BS can be severely degraded by the mutual interference generated from massive IoT devices. In this scenario, the random positions of the transmitters make accurate modeling and analysis of this interference even more complicated.

Stochastic geometry has been regarded as a powerful tool to model and analyze mutual interference between transceivers in the wireless networks, such as conventional cellular networks \cite{novlan2013analytical,andrews2011tractable,S2013Downlink}, wireless sensor networks \cite{deng2016physical}, cognitive radio networks \cite{deng2016artificial,elsawy2013stochastic}, and heterogenous cellular networks \cite{deng2016modeling,HElSawy2014Stochastic,zhong2017heterogeneous}. However, there are two aspects that limit the application of conventional stochastic geometry analysis to the RACH analysis of the cellular-based mIoT networks: 1)  conventional stochastic geometry works focused on analyzing normal uplink and downlink data transmission channel, where the intra-cell interference is not considered, due to the ideal assumption that each orthogonal sub-channel is not reused in a cell, whereas massive IoT devices in a cell may randomly choose and transmit the same preamble using the same sub-channel; 2) these conventional stochastic geometry works only modeled the spatial distribution of transceivers, and ignored the interactions between static properties of physical layer network and the dynamic properties of queue evolving in each transmitter due to the assumptions of backlogged
network with saturated queues \cite{stamatiou2010random,zhong2016stability,gharbieh2016spatiotemporal}.

To model these aforementioned interactions, recent works have studied the stability of spatially spread interacting queues in the network based on stochastic geometry and queuing theory \cite{stamatiou2010random,zhong2016stability,zhong2017heterogeneous,gharbieh2016spatiotemporal}. The work in \cite{stamatiou2010random} is the first paper applying the stochastic geometry and queuing theory to analyze the performance of RACH in distributed networks, where each transmitter is composed of an infinite buffer, and its location is changed following a high mobility random walk. The work in \cite{zhong2016stability} investigated the stable packet arrival rate region of a discrete-time slotted RACH network, where the transceivers are static and distributed as independent Poisson point processes (PPPs). The work in \cite{zhong2017heterogeneous} analyzed the delay in the heterogeneous cellular networks with spatio-temporal random arrival of traffic, where the traffic of each device is modeled by a marked Poisson process, and the statistics of such traffic with different offloading policies are compared. In \cite{gharbieh2016spatiotemporal}, the authors have modeled the randomness in the locations of IoT devices and BSs via PPPs, and leveraged the discrete time Markov chain to model the queue and protocol states of each IoT device. However, the model is limited in capturing the dynamic preamble success probability during the time evolution, such that it can only derive the analytical result during the steady state, and this result is unable to be verified by simulations.

In this paper, we develop a novel spatio-temporal mathematical framework for cellular-based mIoT network using stochastic geometry and probability theory, where the BSs and IoT devices are modeled as independent PPPs in the spatial domain. In the time domain, the new arrival packets of each IoT device are modeled by independent Poisson arrival processes  \cite{zhong2017heterogeneous,chen2009analysis,gotsis2012evolution,zhou2013lte}. The packets status in each IoT device that are jointly populated by the new Poisson arrival packets and the accumulated packets in the previous time slots according to its stochastic geometry analysis, determines the aggregate interference at the received SINR in the current time slot, which then determines the non-empty probability and non-restrict probability of IoT device (i.e., IoT device have back-logged packets and permission to transmit currently) in the current time slot. The contributions of this paper can be summarized in the following points:




\begin{itemize}
  \item
We present a novel spatio-temporal mathematical framework for analyzing contention-based RACH of the mIoT network. Assuming the independent Poisson arrival, the packets accumulation and preamble transmission of a typical IoT device in each time slot is accurately modeled.


  \item

With single time slot, we derive the exact expressions for the preamble detection probability of a randomly chosen BS, the preamble transmission success probability of a randomly chosen IoT device, and the number of received packets per BS in the cellular-based mIoT networks.

  \item
With multiple time slots, the queue statuses are firstly analyzed based on probability theory, and then approximated by their corresponding Poisson arrival distributions, which facilitates the queuing analysis. By doing so, we derive the exact expressions for the preamble transmission success probability of a randomly chosen IoT device in each time slot with the baseline, the ACB, and the back-off schemes for their performance comparison.
  \item
We develop a realistic simulation framework to capture the randomness location, preamble tranmission, and the real packets arrival, accumulation, and departure of each IoT device in each time slot, where the queue evolution as well as the stochastic geometry analysis are all verified by our proposed realistic simulation framework.
  \item
The analytical model presented in this paper can also be applied for the performance evaluation of other types of RACH schemes in the cellular-based networks by substituting its preamble transmission principle.

\end{itemize}

The rest of the paper is organized as follows. Section II presents the network model. Sections III derives preamble detection probability of a randomly chosen BS and the preamble transmission success probability of a randomly chosen IoT device in single time slot. Section IV derives the preamble transmission success probabilities of a randomly chosen IoT device in each time slot with different schemes. Finally, Section V concludes the paper.

\vspace*{-0.3cm}
\section{SYSTEM MODEL}
\vspace*{-0.2cm}
We consider an uplink model for cellular-based mIoT network consists of a single class of base stations (BSs) and IoT devices, which are spatially distributed in  ${{\mathbb R}^2}$ following two independent homogeneous Poisson point process (PPP), ${\Phi_B }$ and $\Phi_D$, with intensities ${{\lambda_B}}$ and $\lambda_D$, respectively. Same as \cite{novlan2013analytical,HElSawy2014Stochastic,Sung2013Stochastic}, we assume each IoT device associates to its geographically closest BS, and thus forms a Voronoi tesselation, where the BSs are uniformly distributed in the Voronoi cell. 
Same as \cite{zhong2016stability,zhong2017heterogeneous}, the time is slotted into discrete time slots, and the number and locations of BSs and IoT devices are fixed all time once they are deployed. 

\vspace*{-0.5cm}
\subsection{Network Description}
\vspace*{-0.2cm}
We consider a standard power-law path-loss model, where the signal power decays at a rate ${r}^{ - \alpha }$ with the propagation distance $r$, and the path-loss exponent ${\alpha }$. We consider Rayleigh fading channel, where the channel power gains $h(x,y)$ between two generic locations $x,y \in {{\mathbb R}^2}$ is assumed to be exponentially distributed random variables with unit mean. All the channel gains are independent of each other, independent of the spatial locations, and identically distributed (i.i.d.). For the brevity of exposition, the spatial indices $(x,y)$ are dropped.

Uplink power control has been an essential technique in cellular network \cite{goldsmith2005wireless,HElSawy2014Stochastic,novlan2013analytical}. We assume that a full path-loss inversion power control is applied at all IoT devices, where each IoT device compensates for its own path-loss to keep the average received signal power equal to a same threshold $\rho$ \cite{3gpp2016Evolved,gharbieh2016spatiotemporal}. By doing so, as a user moves closer to the desired base station, the transmit power required to maintain the same received signal power decreases, which saves energy for battery-powered IoT devices. More importantly, it helps to solve the “near-far” problem, where a BS cannot decode the signals from cell-edge due to high aggregate interference from other nearby IoT devices. The transmit power of $i$th IoT device $P_i$ depends on the distance from its associated BS, and the defined threshold $\rho$, where $P_i = {\rho}{r_i}^{{ \alpha }}$. In order to successfully transmit a signal from the IoT device, the maximum transmit power should be high enough for its path-loss inversion, otherwise, it does not transmit the signal and goes into a truncation outage. Here, we assume that the density of BSs is high enough and none of the IoT device suffers from truncation outage (i.e., the transmit power of IoT device is large enough for uplink path-loss inversion, while not violating its own maximum transmit power constraint).

\vspace*{-0.3cm}
\subsection{Contention-Based Random Access Procedure}
\vspace*{-0.2cm}
In the cellular-based network, the first step to establish an air interface connection is delivering requests to the associated BS via RACH \cite{LTE2013dahlman}, where the contention-based RACH is favored by mIoT network for the initial association to the network, the transmission resources request, and the connection re-establishment during failure \cite{Gotsis2012M2M,biral2015challenges,laya2014random,hasan2013random}. The contention-based RACH has four steps: In step 1, each device randomly chooses a preamble (i.e., orthogonal pseudo code, such as Zadoff-Chu sequence\footnote{In LTE, there are 64 available preambles for RACH in each BS, which are generated on the side of IoT devices from 10 different root sequences \cite{LTE2013dahlman}. Generally, the preambles generated from the same root are completely orthogonal, and preambles generated from two different roots are nearly orthogonal \cite{LTE2013dahlman}. To mitigate the interference among preambles generated from different root sequences, some published literature have specifically studied such correlations, and some results shown that their proposed preamble detectors can asymptotically achieve almost interference-free detection performance (i.e., nearly orthogonal) \cite{kim2017enhanced}. However, this is beyond the scope of this paper, and to focus on studying spatio-temporal model for RACH, we assume different preambles are completely orthogonal.}) from avalaible preamble pool, and send to its associated BS via PRACH. In step 2, the IoT device sets a random access response (RAR) window and waits for the BS to response with an uplink grant in the RAR. In step 3, the IoT device that successfully receives its RAR transmits a radio resource control connection request with identity information to BS. In step 4, the BS transmits a RRC Connection Setup message to the IoT device. Note that, only within the step 1 preamble is transmitted via PRACH, but within other steps signals are transmitted via normal uplink and downlink data transmission channel. Further details on the RACH can be found in \cite{LTE2013dahlman}. 

In the step 1 of contention-based RACH, the IoT device randomly selects a preamble from a group of non-dedicated preambles defined by the BS. Without loss of generality, we assume that each BS has an available preamble pool with the same number of non-dedicated preambles $\xi$, known by its associated IoT devices. Each preamble has an equal probability ($1/\xi$) to be chosen by an IoT device, and the average density of the IoT devices using the same preamble is ${\lambda_{Dp} } =  {\lambda_D}/{\xi}$, where the ${\lambda_{Dp} }$ is measured with unit devices/preamble/$\rm {km^2}$.

In the cellular-based mIoT network, ${\lambda_{Dp} }$ is able to be a huge number, due to that the slotted-ALOHA system allows all IoT devices requesting for access in the first available opportunity. Once a huge number of IoT devices transmit preambles simultaneously, the network performance might degrade due to that the preambles cannot be detected or decoded by the BS \cite{hasan2013random,laya2014random}. Therefore, the contention of preamble in the step 1 becomes one of the main challenges in RACH \cite{leyva2016performance,lin2016estimation,gursu2017hybrid,soorki2017stochastic,gharbieh2016spatiotemporal,3GPP2011Study}. Same as \cite{lin2016estimation,gursu2017hybrid,soorki2017stochastic,gharbieh2016spatiotemporal}, we assume that the step 2, 3, and 4 of RA are always successful whenever the step 1 is successful. In other words, the RACH may fail due to the following two reasons: 1) a preamble cannot be recognized by the received BS, due to its low received SINR; 2) the BS successfully received two or more same preambles simultaneously, such that the collision occurs, and the BS cannot decode any collided preambles. 
In this work, we limit ourselves to single preamble transmission fail same as \cite{gharbieh2016spatiotemporal,zhong2016stability}, and leave the collision for our future work, thus we assume that a RACH procedure is always successful if the IoT device successfully transmits the preamble to its associated BS.
\vspace*{-0.3cm}
\subsection{Physical Random Access CHannel and Traffic Model}
\vspace*{-0.2cm}
We consider a time-slotted mIoT network, where the PRACH happens at the beginning of a time slot within a small time interval $\tau_c$, and the least time of a time slot (i.e., the time between any two PRACHs) is a gap interval duration $\tau_g$ for data transmission as shown in Fig. \ref{fig:1}. Generally, the PRACH is reserved in the uplink channel and repeated in the system with a certain period that specified by the BS. For instance, in the LTE network, the uplink resource reserved for PRACH has a bandwidth corresponding to six resource blocks (1.08 MHz), and the PRACH is repeated with a periodicity varies between every 1 to 20 ms \cite{laya2014random,LTE2013dahlman}. During the PRACH duration, each active IoT device will transmit a preamble to its associated BS to request uplink channel resources for packets transmission. Here, the active IoT device represents that an IoT device is with non-empty buffers (i.e., ${N_{New}^m}+{N_{Cum}^m}>0$, where ${N_{New}^m}$ is the number of new arrived packets, and ${N_{Cum}^m}$ is the number of accumulated packets) and without access restriction, which will be detailed in the following section.

Without loss of generality, we assume the size of buffer in each IoT device is infinite, and none of the packets will be dropped off. At the beginning of the PRACH in the $m$th time slot, each IoT device checks its buffer status to determine whether itself requires to attempt RACH as shown in the Fig. \ref{fig:1}. In detail, the buffer status (i.e., queuing packets) are determined by the new arrived packets, and the accumulated packets that unsuccessfully departs (i.e., unsuccessfully RACH attempts or never been scheduled) before the last time slot.

Once a RACH succeeds, the IoT device will transmit the corresponding data sequences with the scheduled uplink channel resources. Here, we interchangeably use packet to represent the data sequences. In each device, the packets are line a queue waiting to be transmitted, where each packet has the same priority, and the BSs are unaware of the queue status of their associated IoT devices. It is assumed that the BS will only schedule uplink channel resources for the head-of-line packet$^2$ and each IoT deivce transmits packets via a First Come First Serve (FCFS) packets scheduling scheme - the basic and the most simplest packet scheduling scheme, where all packets are treated equally by placing them at the end of the queue once they arrive \cite{gow2006mobile}.

\vspace*{-0.5cm}
\captionsetup{singlelinecheck=true}
\begin{table}[htbp!]
	\centering
	\caption{Packets Evolution in the Typical IoT Device.}
	{\renewcommand{\arraystretch}{0.6}
		\begin{tabular}{|c|c|c|}
			\hline
			Time Slot&   Success  & Failure$\vphantom{\big(^1}$ \\  \hline
			{{1{\rm st}} }  & ${N_{Cum}^1 = 0}$ & ${N_{Cum}^1 = 0}$ $\vphantom{\big(^1}$ \\ \hline
			{{2{\rm nd}} }  &  ${N_{Cum}^2 = N_{New}^1 - 1}$ & ${N_{Cum}^2 = N_{New}^1 }$ $\vphantom{\big(^1}$ \\ \hline
			{{3{\rm rd}} }  & ${N_{Cum}^3 = N_{Cum}^2 + N_{New}^2 - 1}$ & $ {N_{Cum}^3 = N_{Cum}^2 + N_{New}^2}$  $\vphantom{\big(^1}$ \\ \hline
			 \vdots  & \vdots & \vdots  \\ \hline
			 {m{\rm {th}} } & ${N_{Cum}^m = N_{Cum}^{m - 1} + N_{New}^{m-1} - 1}$ & ${N_{Cum}^m = N_{Cum}^{m - 1} + N_{New}^{m - 1}}$  $\vphantom{\big(^1}$ \\ \hline
		\end{tabular}
	}
	\label{table_accord}
\end{table}
\vspace*{-0.5cm}

\captionsetup{singlelinecheck=false} 
\vspace*{-0.5cm}
\begin{figure}[htbp!]
    \begin{center}
        \includegraphics[width=0.45\textwidth]{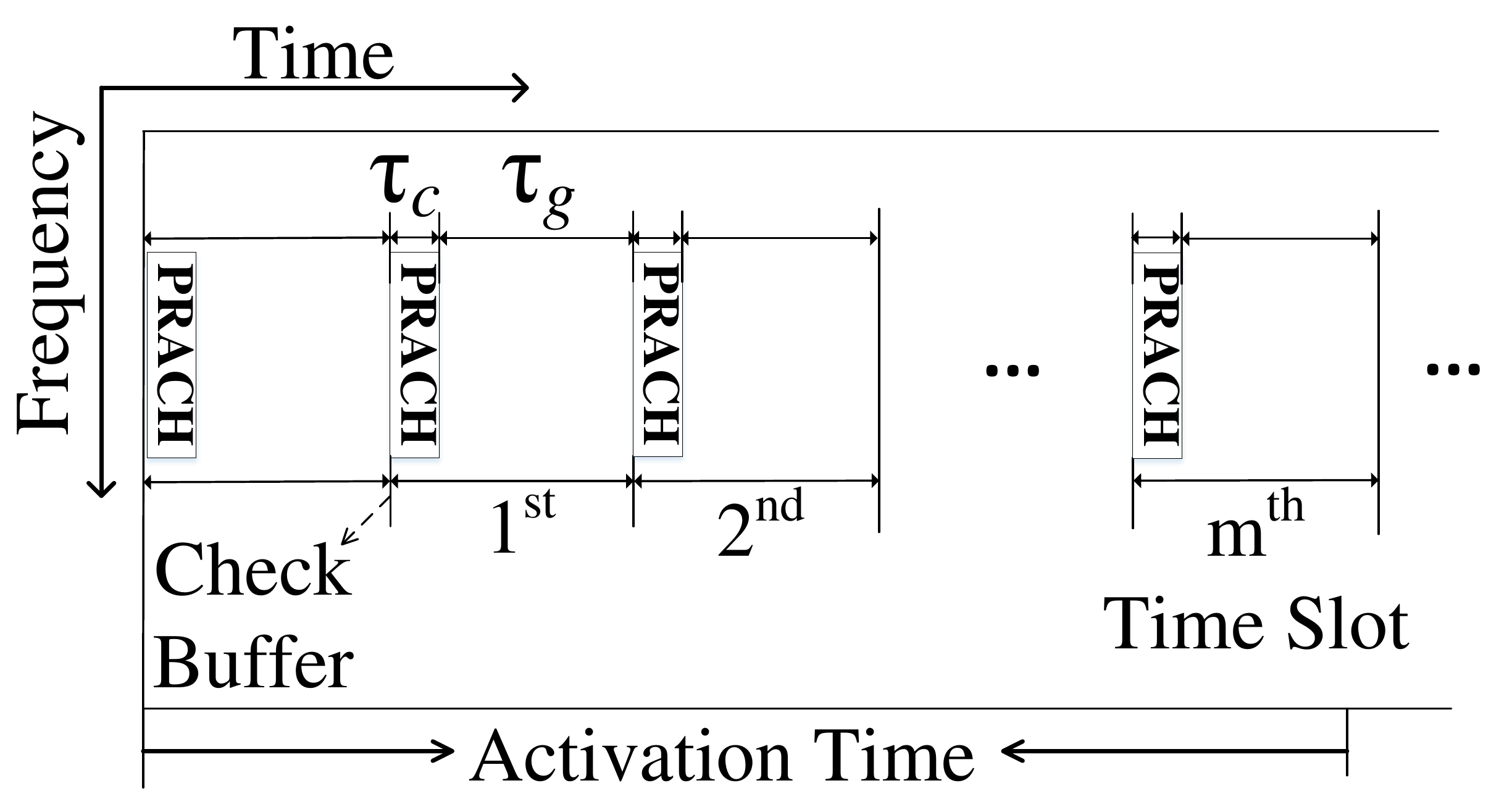}
        \vspace*{-0.4cm}
        \caption{RACH duration and gap duration, and recording the number of accumulated packets $N^m_{\rm Cum}$ in each time slot.}
        \label{fig:1}
    \end{center}
\end{figure}
\vspace*{-0.5cm}

We model the new arrived packets (${N_{New}^m}$) in the $m$th time slot at each IoT device as independent Poisson arrival process, ${\Lambda ^m_{\rm New}}$ with the same intensity $\varepsilon^m_{\rm New}$ as \cite{chen2009analysis,gotsis2012evolution,zhou2013lte} (i.e., these new packets are actually arrived within the $(m-1)$th time slot, but they are first considered in the $m$th time slot due to the slotted-Aloha behaviour). Therefore, the number of new arrival packets $N^m_{\rm New}$ in a specific time slot (i.e., within the time duration $\tau_c+\tau_g$) is described by the Poisson distribution with $N^m_{\rm New}∼\rm{Pois}(\mu ^m_{\rm New} )$, where $\mu ^m_{\rm New} = (\tau_c+\tau_g) \varepsilon^m_{\rm New} $. The accumulated packets (${N_{Cum}^m}$) at each IoT device is evolved following transmission condition over time, which is described in Table I. Specifically, a packet is removed from the buffer once the RACH succeeds, otherwise, this packet will be still in the first place of the queue, and the IoT device will try to request channel resources for the packet in the next available RACH. Note that the data transmission after a successful RACH can be easily extended following the analysis of preamble transmission success probability in RACH. Due to the main focus of this paper is analyzing the contention-based RACH in the mIoT network, we assume that the actual intended packet transmission is always successful if the corresponding RACH succeeds.

\subsection{Transmission Schemes}
\vspace*{-0.2cm}
In the cellular-based mIoT network, a huge number of IoT devices are expected to request for access frequently, such that network congestion may occur due to mass concurrent data and signaling transmission \cite{3GPP2011Study}. This network congestion can lead to a low preamble transmission success probability, and thus result in a great number of packets accumulated in buffers, which may cause unexpected delays. A possible solution is to restrict the access attempts in each IoT device according to some RACH control mechanisms. However, the efficiency of these RACH control mechanisms are required to be studied, due to that overly restricting access requests also creates unacceptable delay as well as leads to low channel resource utilization. In this paper, we study the following three schemes:
\begin{itemize}
  \item
\textbf{Baseline scheme}: each IoT device attempt RACH immediately when there exists packet in the buffer. The baseline scheme is the simplest scheme without any control of traffic. Due to RACH attempts are not be alleviated at the IoT devices, the baseline scheme can contribute to the relatively faster buffer flushing in non-overloaded network scenarios. However, once the network is overloaded, high delays and service unavailability appear due to mass simultaneous access request.
  \item
\textbf{ACB scheme}: each non-empty IoT device draws a random number $q∈ [0, 1]$, and attempts to RACH only when $q ≤ \rm P_{ACB}$, here $\rm P_{ACB}$ is the ACB factor specified by the BS according to the network condition \cite{3GPP2011Study,LTE2013dahlman}. ACB scheme is a basic congestion control method that reduces RACH attempts from the side of IoT devices based on the ACB factor. It is known that a suitable ACB factor can keep the allowable access in a reasonable density, and assure a relative high data transmission rate when the network is overloaded.
  \item
\textbf{Back-off scheme}: each non-empty IoT device transmits packets same as baseline scheme, when there exists packet in the buffer. However, when RACH fails, the IoT device automatically defers the RACH re-attempt and waits for $t_{\rm BO}$ (i.e., the BO facor specified by the BS) time slots until it trys again. Back-off scheme is another basic congestion control method, where each IoT device can automatically alleviate congestion and requires less control message from BS than that of ACB scheme \cite{hasan2013random}.
\end{itemize}

\vspace*{-0.5cm}
\subsection{Signal to Noise plus Interference Ratio}
\vspace*{-0.2cm}
As we mentioned earlier, each IoT device transmits a randomly chosen preamble to its associated BS to request for channel resources, where different preambles represent orthogonal sub-channels, and thus only IoT devices choosing same preamble have correlations. Note that IoT devices belonging to a same BS may choose same preamble, such that the intra-cell interference is considered. A preamble can be successfully received at the associated BS, if its SINR is above the threshold. Based on Slivnyak’s theorem \cite{haenggi2012stochastic}, we formulate the SINR of a typical BS located at the origin as
\vspace*{-0.2cm}
\begin{align}\label{SINR 1}
{ \mathit SINR}^m & = \frac{{\rho {h_0}}}{{ {{\cal I}_{\rm intra}} + {{\cal I}_{\rm inter}} +  {\sigma ^2}}} 
\nonumber \\ 
&  = \frac{{\rho {h_0}}}{{  \underbrace {\sum\limits_{{u_j} \in {\cal Z}_{\rm in} }{{\mathds 1}_{\{ {N^m_{{\rm New}_j}}+{N^m_{{\rm Cum}_j}} > 0\} }} {{\mathds 1}_{\{UR\} }} {\rho {h_j}} }_{{{\cal I}_{\rm intra}}}  +   \underbrace  {\sum\limits_{{u_i} \in {\cal Z}_{\rm out} } { {{\mathds 1}_{\{ {N^m_{{\rm New}_i}}+{N^m_{{\rm Cum}_i}} > 0\} }}{{\mathds 1}_{\{UR\} }} {P_i}{h_i}{{\left\| {{u_i}} \right\|}^{ - \alpha }}}   }_{{{\cal I}_{\rm inter}}} + {\sigma ^2}}},
\end{align}
where $\rho$ is the full path-loss inversion power control threshold, ${h_0}$ is the channel power gain from the typical IoT device to its associated BS, ${\sigma ^2}$ is the noise power, ${{\cal I}_{\rm intra}}$ is the aggregate intra-cell interference, ${{\cal I}_{\rm inter}}$ is the aggregate inter-cell interference\footnote{The PRACH root sequence planning is used to mitigate inter-cell interference among neighboring BS (i.e., neighboring BSs should be using different roots to generate preambles)\cite{LTE2013dahlman}. However, as \cite{gharbieh2016spatiotemporal,soorki2017stochastic}, we focus on providing a general analytical framework of mIoT network without using PRACH root sequence planning, and the extension taking into account PRACH root sequence planning can be treated in future works.
}, ${\cal Z}_{in}$ is the set of intra-cell interfering IoT devices, $N^m_{{\rm New}_j}$ is the number of new arrived packets of $j$th device in the $m$th time slot, $N^m_{{\rm Cum}_j}$ is the number of accumulated packets of $j$th device in the buffer in the $m$th time slot, ${\cal Z}_{out}$ is the set of inter-cell interfering IoT devices, ${\left\| \cdot  \right\|}$ is the Euclidean norm, ${h_i}$ is channel power gain from the $i$th inter-cell interfering IoT device to the typical BS, ${u_i}$ is the distance between the $i$th inter-cell IoT device and the typical BS, and ${P_i}$ is the actual transmit power of the $i$th inter-cell IoT device, and ${P_i}$ depends on the power control threshold $\rho$ and the distance between the $i$th inter-cell typical IoT device and its associated BS $r_i$ with $P_i{\rm{ = }}\rho {{{r_i}}^\alpha }$.

In (\ref{SINR 1}), ${{\mathds 1}_{ \{ \cdot \} }}$ is the indicator function that takes the value $1$ if the statement ${{\mathds 1}_{ \{ \cdot \} }}$ is true, and zero otherwise. Whether an IoT device generates interference depends on two conditions: 1) ${{\mathds 1}_{\{ {N^m_{\rm New}} + {N^m_{{\rm Cum}_i}} > 0\} }}$, which means that an IoT devices is able to generate interference only when its buffer is non-empty; 2) ${{\mathds 1}_{\{UR\} }}$, which means that an IoT devices is able to generate interference only when the IoT devices does not defer its access attempt due to RACH scheme. Additionally, once the two conditions are satisfied, we call the IoT device is active.

Mathematically, the non-empty probability of each IoT device can be treated using the thinning process. We assume that the non-empty probability ${\cal T}^m$ and the non-restrict probability ${\cal R}^m$ of each IoT device in the $m$th time slot are defined as
\vspace*{-0.3cm}
\begin{align}\label{THINNING 1}
{\cal T}^m = {{\mathbb P} \{ {N^m_{\rm New}}+ {N^m_{\rm Cum}} > 0  \}} \text{ ,and } {\cal R}^m = {{\mathbb P} \{  unrestricted  \}}
\end{align}
where the non-restrict probability ${\cal R}^m$ depends on the RACH schemes, which will be discussed in the following. 
The main notations of this paper are summarized in Table II.

\captionsetup{singlelinecheck=true}
\vspace*{-0.3cm}
\begin{table}[htbp!]
	\centering
	\caption{Notation Table}
	{\renewcommand{\arraystretch}{0.6}
		\begin{tabular}{|*{1}{p{1.11cm}}|*{1}{p{6cm}|} |*{1}{p{1.11cm}}|*{1}{p{6cm}|} }
			\hline
			\rowcolor{Gray}
		 Notations   &    Physical means & Notations   &    Physical means $\vphantom{\big(}$ \\  \hline
			${\lambda_D}$   &  The intensity of IoT devices $\vphantom{\big(}$ & $\lambda_B$      &  The intensity of BSs $\vphantom{\big(}$ \\ 
			\rowcolor{Gray}
             $\xi  $         &   $\vphantom{\big(}$The number of available preambles are reserved for the contention-based RA *  	&		$\lambda_{Dp} $ &    The average intensity of IoT devices using the same preamble \\ 
			 $r$ & The distance $\vphantom{\big(}$ 		&	  ${\alpha}$ & The path-loss exponent $\vphantom{\big(}$ \\ 
			 \rowcolor{Gray}
			  $h$ & The Rayleigh fading channel power gain $\vphantom{\big(}$ & $P$ & The transmit power *$\vphantom{\big(}$ \\ 
			  $\rho$ & The full path-loss power control threshold *$\vphantom{\big(}$ & 	  ${\sigma^{_2}}$ & The noise power $\vphantom{\big(}$ \\ 
			  \rowcolor{Gray}
			  $\gamma_{\rm th}$ & The received SINR threshold *$\vphantom{\big(}$  & 			  ${{\cal I}_{\rm intra}}$ & The aggregate intra-cell interference $\vphantom{\big(}$ \\ 
              ${{\cal I}_{\rm inter}}$ & The aggregate inter-cell interference $\vphantom{\big(}$ & $\varepsilon_{\rm New}$ & The intensity of new arrival packets $\vphantom{\big(}$ \\ 
              \rowcolor{Gray}
			  $\tau_g$ & The gap interval duration between two RAs *$\vphantom{\big(}$ & $\tau_c$ & The duration of PRACH $\vphantom{\big(}$ \\ 
               $c$ & $c = 3.575$ is a constant & $m$ & $\vphantom{\big(}$The time slot	  \\ 
               \rowcolor{Gray}	  
			  $\mu _{\rm Cum}^m$ & $\vphantom{\big(}$The intensity of accumulated packets in the $m$th time slot  & $\mu_{\rm New}^m$ & The intensity of new arrival packets in the $m$th time slot $\vphantom{\big(}$ \\ 
			  ${\cal T}^m$ & $\vphantom{\big(}$The non-empty probability of each IoT device in the $m$th time slot  &	  ${{\cal R}^m}$ & $\vphantom{\big(}$The non-restrict probability of each IoT device in the $m$th time slot \\ 
			  \rowcolor{Gray}
			  $\vphantom{\big(}$ $Z_{D}$ & The number of active interfering IoT devices in the cell that a randomly chosen IoT device belongs to &  $Z_{B}$ & The number of active interfering IoT devices in a randomly chosen cell  \\ 
			  $\vphantom{\big(}$ $N_{New}^{m}$ & The number new arrived packets in the $m$th time slot in a specific IoT device & $N_{Cum}^{m}$ & The number accumulated packets in the $m$th time slot in a specific IoT device \\ 
			   \rowcolor{Gray}
			  ${\rm P_{ACB}}$ & The ACB factor with the ACB scheme *$\vphantom{\big(}$ & $t_{\rm BO}$ & The BO factor with the BO scheme *$\vphantom{\big(}$ \\ \hline
			  \bf{Remarks} & \multicolumn{3}{|c|}{The variables marked with * are configurable parameters.$\vphantom{\big(}$ }   \\ \hline
		\end{tabular}
	}
	\label{table_accord}
\end{table}
\vspace*{-0.0cm}

\vspace*{-0.1cm}
\section{SINR Analysis}
\vspace*{-0.2cm}




In this section, we provide a general framework for the performance analysis of each single time slot and each RACH scheme. Due to that the preamble has an equal probability to be chosen, the analysis performed on a randomly chosen preamble can represent the whole network. The probability that the received SINR at the BS exceeds a certain threshold $\gamma_{\rm th}$ is written as
\vspace*{-0.3cm}
\begin{align}\label{TSP 1}
{{\mathbb P}\Big\{ {\frac{{\rho {h_o}}}{{{{\cal I}_{\rm inter}} + {{\cal I}_{\rm intra}} + {\sigma ^2}}} \ge {\gamma_{\rm th}} } \Big\}} & = {{\mathbb P}\Big\{ {h_o} \ge {\frac{\gamma_{\rm th}}{\rho } ( {{{{\cal I}_{\rm inter}} + {{\cal I}_{\rm intra}} + {\sigma ^2}}}  ) } \Big\}} 
\nonumber \\
& = {\mathbb E} \bigg[  { \rm{exp} \Big\{ - {\frac{\gamma_{\rm th}}{\rho } ( {{{{\cal I}_{\rm inter}} + {{\cal I}_{\rm intra}} + {\sigma ^2}}}  ) } \Big\}}  \bigg]
\nonumber \\
&  = {\rm exp}\big( - {\frac{{\gamma_{\rm th}}  }{\rho }{\sigma ^2}} \big){{\cal L}_{{{\cal I}_{\rm intra}}}}(\frac{{\gamma_{\rm th}}  }{\rho }){{\cal L}_{{{\cal I}_{\rm inter}}}}(\frac{{\gamma_{\rm th}}  }{\rho }) ,
\end{align}
where ${{\cal L}_{{\cal I}}}(\cdot)$ denotes the Laplace Transform of the PDF of the aggregate interference ${\cal I}$. The Laplace Transform of aggregate inter-cell interference is characterized in the following Lemma.

\vspace*{-0.1cm}
\begin{lemma}\label{lemma1}
The Laplace Transform of aggregate inter-cell interference received at the typical BS in the cellular-based mIoT network is given by
\vspace*{-0.1cm}
\begin{align}\label{INTER I F}
{{\cal L}_{{{\cal I}_{\rm inter}}}}(\frac{\gamma_{\rm th} }{\rho }) = \exp \big( { - 2{(\gamma_{\rm th})  ^{\frac{2}{\alpha }}}\frac{{{\cal T}^m}{{\cal R}^m}\lambda_{Dp} }{\lambda_B }\int_{{{(\gamma_{\rm th})}^{\frac{{ - 1}}{\alpha }}}}^\infty  {\frac{y}{{1 + {y^\alpha }}}dy} } \big) ,
\end{align}
where ${{\cal T}^m}$ and ${{\cal R}^m}$ are defined in (\ref{THINNING 1}). Remind that $\lambda_{Dp}$ is the intensity of IoT devices using same preamble.
\end{lemma}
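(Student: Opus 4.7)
The plan is to begin from the definition of the Laplace transform of $\mathcal{I}_{\rm inter}$ evaluated at $s=\gamma_{\rm th}/\rho$, exploit independence across interferers, integrate out the Rayleigh fading and activity marks, and finally average out the transmit-power randomness that arises from full path-loss inversion power control.

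First I would condition on the point process of inter-cell interferers and use the independence of the fading gains $h_i \sim \mathrm{Exp}(1)$. Averaging $\exp(-s\mathds{1}_{\rm active}P_ih_i\|u_i\|^{-\alpha})$ over $h_i$ yields $1/(1+sP_i\|u_i\|^{-\alpha})$, and then averaging over the two independent indicators $\mathds{1}_{\{N^m_{{\rm New}_i}+N^m_{{\rm Cum}_i}>0\}}$ and $\mathds{1}_{\{UR\}}$ produces the per-point factor
\begin{equation*}
1 - \mathcal{T}^m\mathcal{R}^m\,\frac{sP_i\|u_i\|^{-\alpha}}{1+sP_i\|u_i\|^{-\alpha}}.
\end{equation*}
At this stage I would invoke the PGFL of the homogeneous PPP of interferers using the tagged preamble (density $\lambda_{Dp}$) to turn the product into an exponential of an integral over $\mathbb{R}^2$. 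Because interferers are effectively thinned by $\mathcal{T}^m\mathcal{R}^m$, this thinning factor can be pulled outside.

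Next I would substitute $P_i=\rho r_i^{\alpha}$, so that $sP_i=\gamma_{\rm th}r_i^{\alpha}$. Passing to polar coordinates around the typical BS, the key modeling step is that because of the nearest-BS association, the interferer cannot lie closer to the typical BS than to its own serving BS, so the radial integral runs from $r_i$ to $\infty$. Using the substitution $y = r/(r_i\gamma_{\rm th}^{1/\alpha})$ normalises the integrand to $y/(1+y^\alpha)$ with lower limit $\gamma_{\rm th}^{-1/\alpha}$, producing a factor $2\pi r_i^2\gamma_{\rm th}^{2/\alpha}\int_{\gamma_{\rm th}^{-1/\alpha}}^\infty \tfrac{y}{1+y^\alpha}\,dy$. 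Finally, I would average $r_i^2$ using the Rayleigh nearest-BS distance density $f_R(r)=2\pi\lambda_B r e^{-\pi\lambda_B r^2}$, giving $\mathbb{E}[r_i^2]=1/(\pi\lambda_B)$, which cancels the $\pi$ and yields the ratio $\lambda_{Dp}/\lambda_B$ exactly as stated.

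The main obstacle, and the only non-mechanical step, is justifying that the locations of inter-cell interferers can be treated as a homogeneous PPP of density $\lambda_{Dp}$ that is independent of the typical BS's Voronoi cell, while still imposing the exclusion zone $\|u_i\|>r_i$ consistent with the nearest-BS association. This is the standard approximation used in stochastic-geometry uplink analyses with fractional path-loss inversion (as in the cited works of Novlan, ElSawy, and Haenggi), and I would simply invoke it, noting that the $r_i$'s are taken i.i.d. with the Rayleigh nearest-neighbour distribution; everything else is routine computation with the PGFL and the fading Laplace transform.
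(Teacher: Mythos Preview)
Your proposal is correct and follows essentially the same route as the paper's proof: average out the Rayleigh fading, apply the PGFL with the thinned density ${\cal T}^m{\cal R}^m\lambda_{Dp}$ over the exclusion region $\|u_i\|>r_i$, change variables to obtain the $\int y/(1+y^\alpha)\,dy$ integral, and then average over the serving-BS distance. The only cosmetic difference is that the paper parameterizes the last expectation through the transmit power $P=\rho r^\alpha$ and invokes the closed-form moment $E_P[P^{2/\alpha}]=\rho^{2/\alpha}/(\pi\lambda_B)$ from \cite{HElSawy2014Stochastic}, whereas you compute $\mathbb{E}[r_i^2]=1/(\pi\lambda_B)$ directly from the Rayleigh nearest-neighbour density; these are equivalent.
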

\vspace*{-0.6cm}
\begin{proof}See Appendix \ref{INTER IF}.
\end{proof}


We perform the analysis on a randomly chosen BS and a BS associating with a randomly chosen IoT device in terms of the preamble detection probability and the preamble transmission success probability. The probability that the received SINR at a randomly chosen BS exceeds a certain threshold $\gamma_{\rm th}$ has been studied in many stochastic geometry works \cite{HElSawy2014Stochastic,novlan2013analytical,Sung2013Stochastic}. Those analyses focus on the uplink transmission channel of a cellular networks, without considering intra-cell interference due to TDMA or FDMA assumptions, and only considered inter-cell interference. In their models, the average aggregate interference is the same, no matter if the tagged BS is randomly chosen, or is determined by a randomly chosen device via association, thus the probability that the received SINR exceeds a threshold $\gamma_{\rm th}$ at a randomly chosen BS is equally same as the probability of a BS associating with a randomly chosen uplink device.

Different from the conventional stochastic geometry works in \cite{HElSawy2014Stochastic,novlan2013analytical,Sung2013Stochastic,S2013Downlink} with no intra-cell interference, we take into account the intra-cell interference due to the same preamble reuse among many IoT devices in a cell during their uplink RACH. We will derive the preamble detection probability from the view of a \textit{randomly chosen BS} (i.e., each BS has an equal probability to be chosen), and the preamble transmission success probability from the view of a BS that a \textit{randomly chosen IoT device} belongs to (i.e., the probability of a BS being chosen is determined by the number of its associated IoT devices). An example is shown in Fig. \ref{fig:2} to make a distinction between these two characteristics. For the preamble detection probability, each BS has equal probability to be chosen, and for the preamble transmission success probability, the BS 1 has a probability of 5/6 to be chosen (i.e., BS 1 covers 5 IoT devices), but BS 2 only has a probability of 1/6 to be chsoen. Concludely, the difference between these two characteristics comes from the fact that a cell, that a \textit{randomly chosen IoT device} belonging to, has chance to cover more IoT devices than a \textit{randomly chosen cell} \cite{francois2009stochastic,S2013Downlink}.

\begin{figure}[htbp!]
    \begin{minipage}[t]{0.45\textwidth}
    \centering
        \includegraphics[width=1\textwidth]{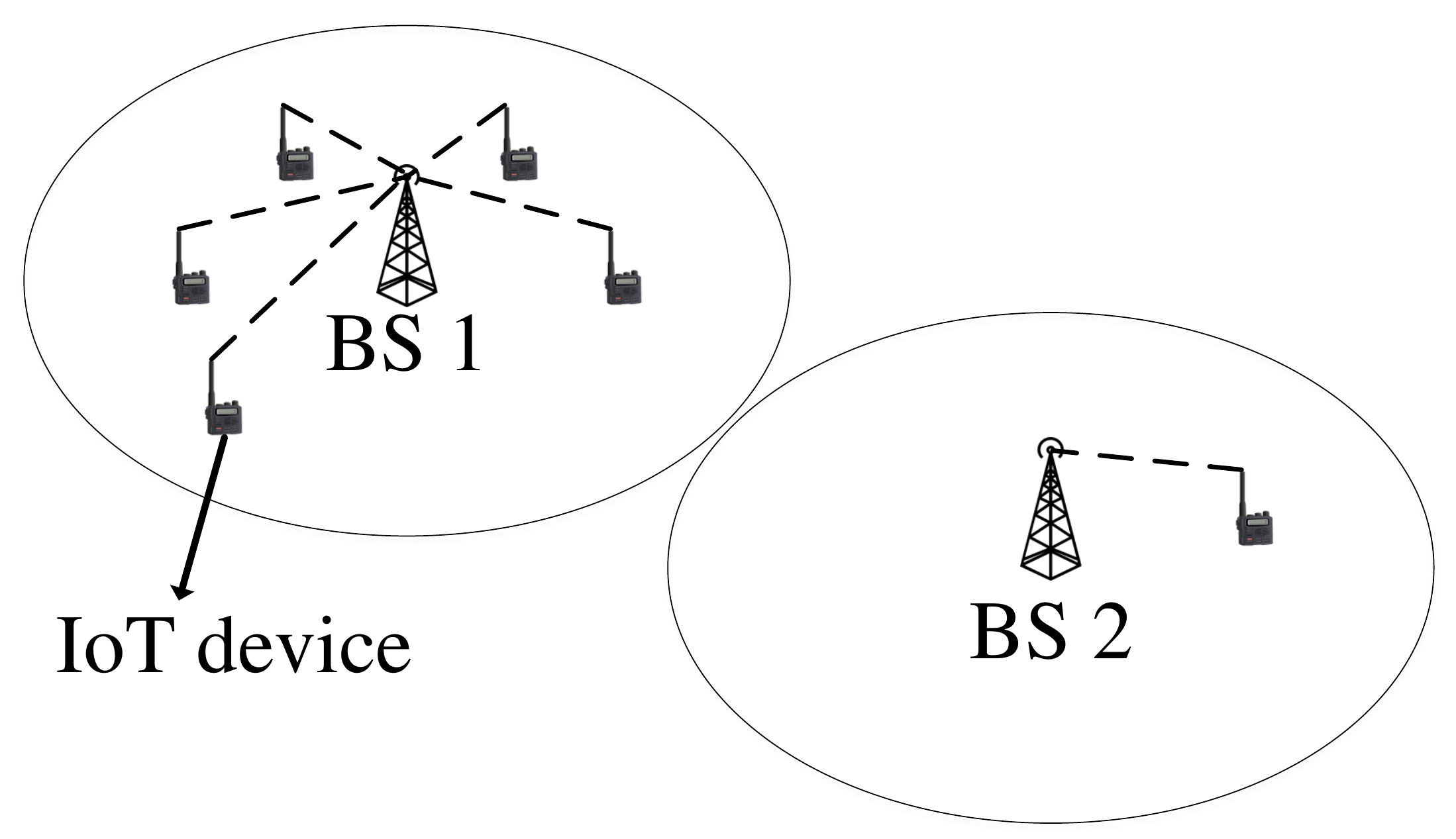}
        \vspace*{-0.5cm}
        \caption{An example of network model shows differences between the preamble detection probability and the preamble transmission success probability. 
        }
            \label{fig:2}
    \end{minipage}
    \hspace*{+1cm}
        \begin{minipage}[t]{0.45\textwidth}
    \centering
        \includegraphics[width=1\textwidth]{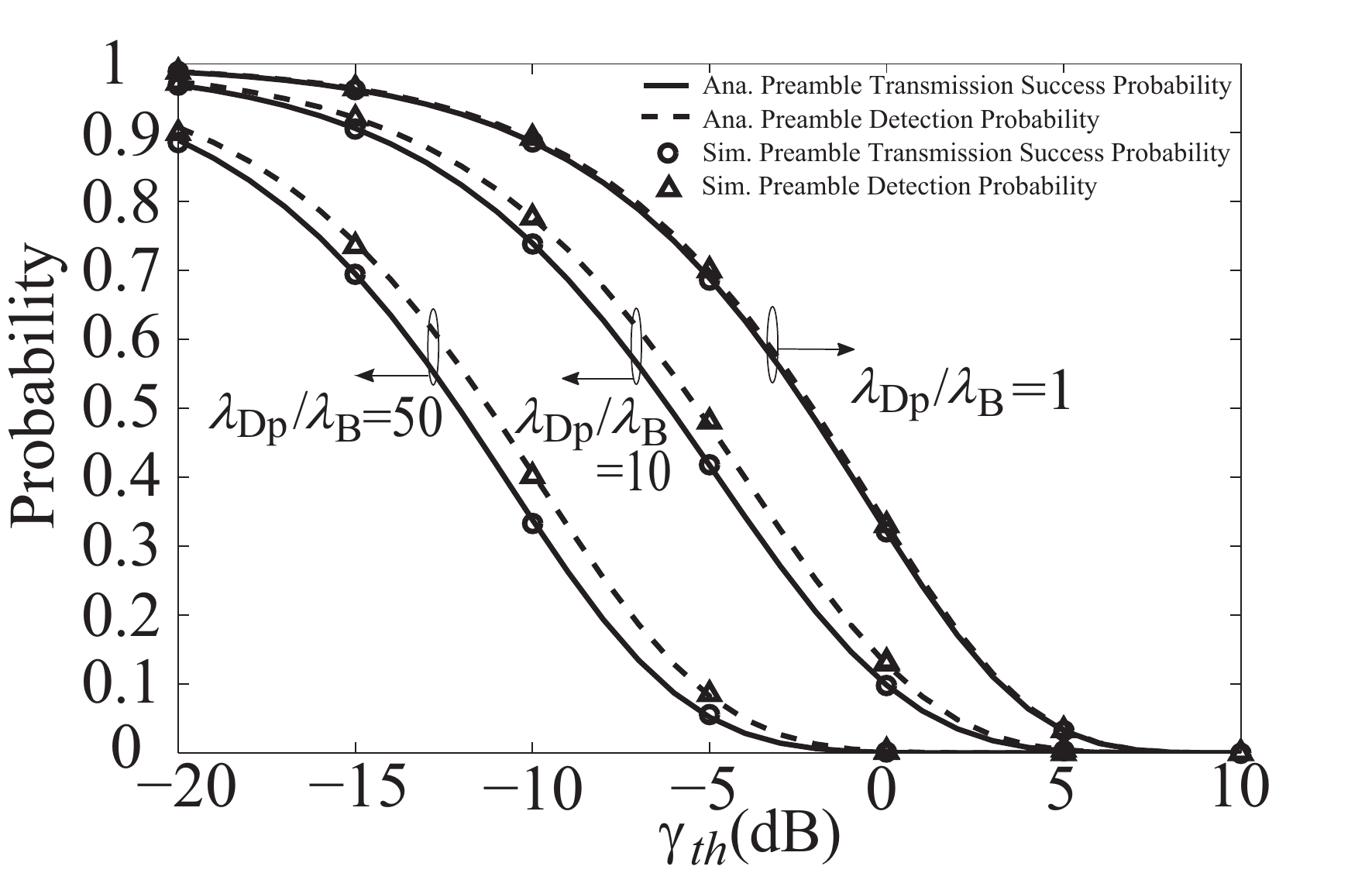}
        \vspace*{-0.5cm}
        \caption{The preamble detection probability ${\cal P}_{detection}^1$ and the preamble transmission success probability ${\cal P}^1$ versus the SINR threshold $\gamma_{\rm th}$ for the $1$st single time slot. We set ${{\cal T}^1}=1-e^{0.1}$, $\rho = -90$ dBm, ${\sigma ^2 = −90 }$ dBm, $\lambda_{B} = 10$ BS/km$^2$, $\alpha = 4$, $\gamma_{\rm th}=-10$ dB, and the baseline scheme is considered with ${{\cal R}^1}=1$.}
                \label{fig:3}
        \end{minipage}
\end{figure}
\vspace*{-0.5cm}

\vspace*{-0.2cm}
\subsection{Preamble Transmission Success Probability}
\vspace*{-0.1cm}
We first perform analysis on a BS in which a randomly chosen IoT device belongs to, where the other active IoT devices in the same cell choosing same preamble are visualized as interfering IoT devices. Since the interference generating by each intra-cell IoT device is strictly equal to $\rho$, such that the aggregate intra-cell interference only depends on the number of active interfering IoT devices in the Voronoi cell. We assume $\widehat Z_{in}$ denotes the number of active IoT device in a specific Voronoi cell, and let $Z_D=	\left| \widehat Z_{in} \right|-1$ denotes the number of active interfering IoT devices in such cell, where the Laplace Transform of aggregate intra-cell interference is conditioned on $Z_D$. The Probability Density Function (PDF) of the number of active interfering IoT devices in a Voronoi cell has been derived by the Monte Carlo method in \cite{ferenc2007size}, and conditioned on a randomly chosen IoT device in its cell, the PMF of the number of interfering intra-cell IoT devices in that cell $Z_D$ is expressed as \cite{S2013Downlink}
\vspace*{-0.2cm}
\begin{align}\label{DS chosen 1}
{{\mathbb P}} \left\{ {{Z_D} = n} \right\}{\rm{ = }} \frac{{{{\rm{c}}^{(c+1)}}\Gamma (n + c + 1){{(\frac{{{\cal T}^m}{{\cal R}^m}\lambda_{Dp}}{\lambda_B })}^{n}}}}{{\Gamma (c+1)\Gamma (n+1){{(\frac{{{{\cal T}^m}{{\cal R}^m}\lambda_{Dp}}}{\lambda_B } + c)}^{n + c + 1}}}}  ,
\end{align}
where $c = 3.575$ is a constant related to the approximate PMF of the PPP Voronoi cell, and $\Gamma \left( \cdot \right)$ is gamma function. The Laplace Transform of aggregate intra-cell interference is conditioned on the number of interfering intra-cell IoT devices ${Z_D}$, which is derived in the following Lemma.

\begin{lemma}\label{lemma2}
The Laplace Transform of aggregate intra-cell interference at the BS to which a randomly chosen IoT device belongs in the cellular-based mIoT network is given by
\vspace*{-0.1cm}
\begin{align}\label{Intra I 2}
{{\cal L}_{{{\cal I}_{\rm intra}}}}(\frac{\gamma_{\rm th} }{\rho })  & = {\mathbb P}\left\{ {{Z_D} = 0} \right\}{\rm{ + }}\sum\limits_{{\rm{n = 1}}}^\infty  {{\mathbb P}\left\{ {{Z_D} = n} \right\}{\big( {\frac{{1 }}{{1 + \gamma_{\rm th} }}} \big)^n}} ={{{\big( {1 + \frac{{{{{\cal T}^m}{{\cal R}^m}\lambda _{Dp}}{\gamma _{\rm th}}}}{{c{\lambda _B}(1 + {\gamma _{\rm th}})}}} \big)}^{ - c-1}}}.
\end{align}
\end{lemma}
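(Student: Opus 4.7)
The plan is to derive the Laplace transform in two stages: first compute it conditional on the number of active intra-cell interferers $Z_D$, and then average over the PMF of $Z_D$ given in (\ref{DS chosen 1}).

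First I would exploit the full path-loss inversion power control. Because each intra-cell interferer compensates its own path-loss to deliver exactly $\rho$ at the typical BS, the contribution of every active intra-cell interferer reduces to $\rho h_j$ where $h_j$ is unit-mean exponential. Conditioned on $\{Z_D = n\}$, the aggregate intra-cell interference is therefore $\sum_{j=1}^{n}\rho h_j$, and by independence of the channel gains,
\begin{align*}
\mathbb{E}\!\left[\exp\!\left(-\tfrac{\gamma_{\rm th}}{\rho}\mathcal{I}_{\rm intra}\right)\,\big|\,Z_D=n\right]
=\big(\mathbb{E}[e^{-\gamma_{\rm th} h}]\big)^{n}
=\left(\frac{1}{1+\gamma_{\rm th}}\right)^{\!n}.
\end{align*}
De-conditioning immediately gives the first equality in (\ref{Intra I 2}).

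The main technical step is then to recognize the resulting series as a negative-binomial-type generating function. Writing $A=\mathcal{T}^m\mathcal{R}^m\lambda_{Dp}/\lambda_B$ and $z=1/(1+\gamma_{\rm th})$, substitute (\ref{DS chosen 1}) to obtain
\begin{align*}
{\cal L}_{{\cal I}_{\rm intra}}\!\left(\tfrac{\gamma_{\rm th}}{\rho}\right)
=\frac{c^{c+1}}{(A+c)^{c+1}\Gamma(c+1)}\sum_{n=0}^{\infty}\frac{\Gamma(n+c+1)}{\Gamma(n+1)}\left(\frac{Az}{A+c}\right)^{\!n}.
\end{align*}
Here I would use the generalized binomial identity $\sum_{n\ge 0}\binom{n+c}{n}x^{n}=(1-x)^{-c-1}$ (valid for $|x|<1$, which holds since $Az/(A+c)<1$). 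This collapses the sum to $\bigl(1-Az/(A+c)\bigr)^{-c-1}$, and after simplifying with $1-z=\gamma_{\rm th}/(1+\gamma_{\rm th})$ one recovers $\bigl(1+A\gamma_{\rm th}/[c(1+\gamma_{\rm th})]\bigr)^{-c-1}$, which is exactly the right-hand side of (\ref{Intra I 2}).

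The hard part, if any, is the algebraic manipulation connecting the ratio of Gamma functions with the generalized binomial coefficient $\binom{n+c}{n}$ (since $c$ is not an integer) and justifying the use of the non-integer binomial series; convergence is guaranteed by $Az/(A+c)<1$ because $\gamma_{\rm th}>0$ makes $z<1$. Everything else is conditioning, independence of the $h_j$'s, and bookkeeping.
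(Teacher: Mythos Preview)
Your proposal is correct and follows essentially the same route as the paper's own proof: condition on $Z_D=n$, use the unit-mean exponential Laplace transform to get $(1+\gamma_{\rm th})^{-n}$, and then average over the PMF in (\ref{DS chosen 1}). In fact you spell out the final ``mathematical manipulations'' (the negative-binomial/generalized-binomial series identity) more explicitly than the paper does, which leaves that step as a one-line remark.
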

\begin{proof} See Appendix \ref{Intra IF}.
\end{proof}

Substituting (\ref{INTER I F}) and (\ref{Intra I 2}) into (\ref{TSP 1}), we derive the preamble transmission success probability of the $1$st time slot $P_t^1$ in the following theorem.
\begin{theorem}\label{theorem1}
In the depicted cellular-based mIoT network, the preamble transmission success probability of a randomly chosen IoT device of the $m$st time slot is given by
\vspace*{-0.1cm}
\begin{align}\label{Pt 1}
{{\cal P}^m} =  & \exp \Big( { - \frac{{{\gamma _{\rm th}} {\sigma ^{\rm{2}}}}}{\rho } - { 2{(\gamma_{\rm th})  ^{\frac{2}{\alpha }}}\frac{{{\cal T}^m}{{\cal R}^m}\lambda_{Dp} }{\lambda_B }\int_{{{(\gamma_{\rm th})}^{\frac{{ - 1}}{\alpha }}}}^\infty  {\frac{y}{{1 + {y^\alpha }}}dy} }} \Big){{{\Big( {1 + \frac{{{{{\cal T}^m}{{\cal R}^m}\lambda _{Dp}}{\gamma _{\rm th}}}}{{c{\lambda _B}(1 + {\gamma _{\rm th}})}}} \Big)}^{ -c-1}}}.
\end{align}
%
\end{theorem}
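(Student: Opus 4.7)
The plan is to assemble Theorem~\ref{theorem1} by combining the generic SINR decomposition in (\ref{TSP 1}) with the two Laplace-transform lemmas already proved, so the proof is essentially a substitution argument once a few measurability/independence points are cleaned up.

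First, I would start from the SINR condition ${\mathit SINR}^m \ge \gamma_{\rm th}$ and isolate the desired-link fading gain $h_0$. Because $h_0 \sim \mathrm{Exp}(1)$ and is independent of everything in $\mathcal{I}_{\rm intra}$, $\mathcal{I}_{\rm inter}$, and $\sigma^2$, conditioning on the aggregate interference and then taking the expectation yields the chain of equalities already displayed in (\ref{TSP 1}), producing the product ${\rm exp}(-\gamma_{\rm th}\sigma^2/\rho)\,\mathcal{L}_{\mathcal{I}_{\rm intra}}(\gamma_{\rm th}/\rho)\,\mathcal{L}_{\mathcal{I}_{\rm inter}}(\gamma_{\rm th}/\rho)$. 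The one thing worth flagging here is the factorization into a product of Laplace transforms: it is legitimate because $\mathcal{I}_{\rm intra}$ and $\mathcal{I}_{\rm inter}$ are functions of disjoint parts of the PPP (points inside versus outside the typical Voronoi cell), which are independent by the independent-scattering property of PPPs, and the fading gains are i.i.d.\ across all devices.

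Next, I would substitute Lemma~\ref{lemma1} for $\mathcal{L}_{\mathcal{I}_{\rm inter}}(\gamma_{\rm th}/\rho)$, which contributes the exponential factor containing the integral $\int_{\gamma_{\rm th}^{-1/\alpha}}^{\infty} y/(1+y^\alpha)\,dy$, and Lemma~\ref{lemma2} for $\mathcal{L}_{\mathcal{I}_{\rm intra}}(\gamma_{\rm th}/\rho)$, which contributes the closed-form factor $\bigl(1+\tfrac{\mathcal{T}^m \mathcal{R}^m \lambda_{Dp}\gamma_{\rm th}}{c\lambda_B(1+\gamma_{\rm th})}\bigr)^{-c-1}$. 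Combining the three factors and merging the two exponentials into a single $\exp\bigl(-\gamma_{\rm th}\sigma^2/\rho - 2\gamma_{\rm th}^{2/\alpha}(\mathcal{T}^m\mathcal{R}^m\lambda_{Dp}/\lambda_B)\int\cdots\bigr)$ produces the expression in (\ref{Pt 1}) verbatim.

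I do not anticipate a genuine technical obstacle, since the heavy lifting (thinning by $\mathcal{T}^m\mathcal{R}^m$, treatment of the PPP Voronoi size distribution with the constant $c=3.575$, and the Rayleigh-fading Laplace functional) is already discharged in Lemmas~\ref{lemma1} and~\ref{lemma2}. The only subtle point that merits explicit mention is that this derivation treats $\mathcal{T}^m$ and $\mathcal{R}^m$ as deterministic thinning factors applied independently per device, which is justified by a first-order independence approximation on queue states across the PPP; this assumption is consistent with how it was used inside both lemmas, so the substitution into (\ref{TSP 1}) is internally coherent and the theorem follows.
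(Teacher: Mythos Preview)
Your proposal is correct and matches the paper's approach exactly: the paper simply substitutes Lemma~\ref{lemma1} (eq.~(\ref{INTER I F})) and Lemma~\ref{lemma2} (eq.~(\ref{Intra I 2})) into the SINR decomposition (\ref{TSP 1}) and combines the exponential factors. Your additional remarks on the independence of $\mathcal{I}_{\rm intra}$ and $\mathcal{I}_{\rm inter}$ and on the deterministic-thinning interpretation of $\mathcal{T}^m,\mathcal{R}^m$ are consistent with the paper's implicit assumptions but are not spelled out there.
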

\begin{proof} See Appendix \ref{INTER IF} and \ref{Intra IF}.
\end{proof}

\vspace*{-0.2cm}
\subsection{Preamble Detection Probability}
\vspace*{-0.2cm}
Next, we move to the preamble decoding probability that is performed on a randomly chosen BS, and one of its active associated IoT device (with a preamble being randomly chosen) is tagged, where the other active IoT devices choosing same preamble are visualized as interfering IoT devices. Conditioned on a randomly chosen BS, the Probability Mass function (PMF) of the number of IoT devices $\left| \widehat Z_{in} \right|$ in a randomly chosen BS has been clearly introduced in \cite{S2013Downlink}, which is expressed as
\vspace*{-0.35cm}
\begin{align}\label{BS chosen 3}
{{\mathbb P}} \left\{ {{\left| \widehat Z_{in} \right|} = n} \right\}{\rm{ = }} \frac{{{{\rm{c}}^c}\Gamma (n + c){{(\frac{{{\cal T}^m}{{\cal R}^m}\lambda_{Dp}}{\lambda_B })}^{n}}}}{{\Gamma (c)\Gamma (n+1){{(\frac{{{{\cal T}^m}{{\cal R}^m}\lambda_{Dp}}}{\lambda_B } + c)}^{n + c}}}} .
\end{align}
For the Voronoi cell with at least one active IoT device, the PMF of the number of active interfering intra-cell IoT devices ${Z_B}$ in a randomly chosen Voronoi cell (BS) is given by
\vspace*{-0.05cm}
\begin{align}\label{BS chosen 4}
{{\mathbb P}} \left\{ {{Z_B} = n} \right\}{\rm{ = }}\frac{{{\mathbb P}} \left\{ {{\left| \widehat Z_{in} \right|} = n + 1} \right\}}{{1 - {{\mathbb P}} \left\{ {{\left| \widehat Z_{in} \right|} = 0} \right\}}} = \frac{{{{\rm{c}}^c}\Gamma (n + c + 1){{(\frac{{{\cal T}^m}{{\cal R}^m}\lambda_{Dp}}{\lambda_B })}^{(n+1)}}{{(1{\rm{ + }}\frac{{{{\cal T}^m}{{\cal R}^m}\lambda_{Dp}}}{{{\rm{c}}\lambda_B }})}^c}}}{{\Gamma (c)\Gamma (n+2){{(\frac{{{{\cal T}^m}{{\cal R}^m}\lambda_{Dp}}}{\lambda_B } + c)}^{n + c + 1}}\left( {{{(1{\rm{ + }}\frac{{{{\cal T}^m}{{\cal R}^m}\lambda_{Dp}}}{{{\rm{c}}\lambda_B }})}^c} - 1} \right)}}  .
\end{align}

The difference between (\ref{BS chosen 4}) and (\ref{DS chosen 1}) is clearly explained in \cite{francois2009stochastic}. Briefly speaking, in (\ref{BS chosen 4}), each Voronoi cell has an equal probability to be chosen, whilst in (\ref{DS chosen 1}), a Voronoi cell with more IoT devices has a higher probability to be chosen. Following similar approach in the proof of Lemma 2, and with the help of (\ref{BS chosen 4}), 
we derive the preamble detection probability of the typical BS in the $1$st time slot ${\cal P}_{detection}^m$ in the following Lemma.

\begin{lemma}\label{lemma3}
The preamble detection probability of an typical IoT device located in a randomly chosen BS in the cellular-based mIoT network is given by
\vspace*{-0.1cm}
\begin{align}\label{Pd 1}
{{\cal P}_{detection}^m} =  & \exp \Big( { - \frac{{\gamma_{th} {\sigma ^{\rm{2}}}}}{\rho } - { 2{(\gamma_{th})  ^{\frac{2}{\alpha }}}\frac{{{\cal T}^m}{{\cal R}^m}\lambda_{Dp} }{\lambda_B }\int_{{{(\gamma_{th})}^{\frac{{ - 1}}{\alpha }}}}^\infty  {\frac{y}{{1 + {y^\alpha }}}dy} }} \Big)\Big[ {{{\Big( {1 + \frac{{{{\cal T}^m}{{\cal R}^m}{\lambda _{Dp}}{\gamma _{th}}}}{{c{\lambda _B}(1 + {\gamma _{th}})}}} \Big)}^{ - c}}} \Big.
 \nonumber \\
& \Big. { - {{\Big( {\frac{{c{\lambda _B}}}{{c{\lambda _B} + {{\cal T}^m}{{\cal R}^m}{\lambda _{Dp}}}}} \Big)}^{ - c}}} \Big]\frac{{\left( {1 + {\gamma _{th}}} \right){{\left( {1 + \left( {{{\cal T}^m}{{\cal R}^m}{\lambda _{Dp}}/ {\lambda _B}} \right)} \right)}^c}}}{{{{\left( {1 + \left( {{{\cal T}^m}{{\cal R}^m}{\lambda _{Dp}}/ {\lambda _B}} \right)} \right)}^c} - 1}} .
\end{align}
\end{lemma}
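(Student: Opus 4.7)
The plan is to mirror the proof of Theorem 1 (which combined Lemmas 1 and 2 through equation (3)), replacing only the intra-cell interference Laplace transform to reflect that the tagged BS is chosen uniformly at random rather than via device-association.

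First, I would observe that the inter-cell interference Laplace transform from Lemma 1 carries over verbatim. The thinned Poisson point process of active out-of-cell interferers has intensity $\mathcal{T}^m \mathcal{R}^m \lambda_{Dp}$ regardless of whether we condition on a randomly chosen BS or on the BS associated with a randomly chosen device, and the Slivnyak-based calculation at the typical BS is identical. Only the intra-cell term needs to be recomputed, since the PMF governing the number of same-preamble co-cell interferers changes from (4) to (8).

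For the intra-cell piece, full path-loss inversion forces each interferer's contribution to be exactly $\rho h_j$ with $h_j \sim \mathrm{Exp}(1)$, so conditioning on $Z_B = n$ yields the Rayleigh-fading Laplace transform $(1 + \gamma_{th})^{-n}$. Deconditioning with the PMF (8) gives
$$\mathcal{L}_{\mathcal{I}_{\mathrm{intra}}}(\gamma_{th}/\rho) = \sum_{n=0}^{\infty} \mathbb{P}\{Z_B = n\} \Big(\frac{1}{1+\gamma_{th}}\Big)^n.$$
Since (8) is $\mathbb{P}\{|\widehat{Z}_{in}| = n+1\}/(1 - \mathbb{P}\{|\widehat{Z}_{in}| = 0\})$, I would pull the normalizer outside the sum and re-index $k = n+1$, turning the inner sum into the probability generating function of $|\widehat{Z}_{in}|$ evaluated at $1/(1+\gamma_{th})$, minus the $k=0$ term to compensate for the shift. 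Recognizing (7) as a negative-binomial PMF with shape $c$ and success parameter $p = \mathcal{T}^m\mathcal{R}^m\lambda_{Dp}/(\mathcal{T}^m\mathcal{R}^m\lambda_{Dp} + c\lambda_B)$ lets the PGF close as $((1-p)/(1-pz))^c$. Substituting $z = 1/(1+\gamma_{th})$ produces the first bracketed factor in (9); subtracting the $k=0$ term $\mathbb{P}\{|\widehat{Z}_{in}|=0\}$ produces the second. The $1+\gamma_{th}$ prefactor left over from the re-indexing, combined with the normalizer $1 - \mathbb{P}\{|\widehat{Z}_{in}|=0\}$, gives the trailing rational expression.

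The proof then closes by substituting this intra-cell Laplace transform together with Lemma 1 into the master expression (3). The principal obstacle is algebraic bookkeeping: one must carefully distinguish the \emph{size-biased} PMF (4) used in Lemma 2 from the \emph{unbiased-but-conditioned-on-nonempty} PMF (8) used here, track how the $1+\gamma_{th}$ prefactor, the normalizer, and the subtracted $k=0$ term combine, and verify that the final grouping reproduces the bracket-times-rational form displayed in (9).
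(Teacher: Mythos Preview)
Your proposal is correct and follows essentially the same approach as the paper, which simply states that the result is obtained by ``following the proofs of Lemma~\ref{lemma1} and Lemma~\ref{lemma2}'' with the unbiased conditioned PMF~(\ref{BS chosen 4}) in place of the size-biased PMF~(\ref{DS chosen 1}). Your explicit negative-binomial PGF computation and the re-indexing argument supply the algebraic detail that the paper omits, and they produce exactly the bracketed and trailing factors displayed in~(\ref{Pd 1}).
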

\begin{proof} 
Following the proofs of Lemma \ref{lemma1} and Lemma \ref{lemma2}.
\end{proof}

In Lemma \ref{lemma3}, the preamble detection probability of an IoT device located in a \textit{randomly chosen BS} is analyzed based on the number of active interfering intra-cell IoT devices in that randomly chosen Voronoi cell (BS) in (\ref{BS chosen 4}), whereas in Theorem \ref{theorem1}, the preamble transmission success probability of a \textit{randomly chosen IoT device} is described by the number of interfering intra-cell IoT devices in that cell, where that randomly chosen IoT device belongs to in (\ref{DS chosen 1}). Fig. \ref{fig:3} plots the preamble detection probability ${\cal P}_{detection}$ and the preamble transmission success probability ${\cal P}$ versus the SINR threshold $\gamma_{\rm th}$ for a single time slot using (\ref{Pd 1}) and (\ref{Pt 1}), respectively. As expected, the preamble transmission success probability of a randomly chosen IoT device is always lower than the preamble detection probability of a randomly chosen BS, due to that a randomly chosen IoT device has higher chance to associate with a BS with large number of intra-cell interfering IoT devices as shown in (\ref{BS chosen 4}) and (\ref{DS chosen 1}), which leads to relatively low average received SINR.

In the following queue evolution analysis, we will study each packet that departs or accumulates at each IoT device in each time slot, which is determined by whether the RACH procedure succeeds or fails. To do so, the probability of RACH success in each time slot is required under the condition that each IoT device is equally treated (i.e., each IoT device has an equal probability to be chosen as a typical device no matter it is located in a cell with a relatively large or small number of IoT devices). Therefore, the following derivations are all based on the preamble transmission success probability ${{\cal P}^m}$ (i.e. it is performed on a BS in which a randomly chosen IoT device belongs to.) provided in Theorem \ref{theorem1}.

\vspace*{-0.4cm}
\section{Queue Evolution Analysis}
\vspace*{-0.2cm}

In this section, we analyze the performance of the cellular-based mIoT network in each time slot with different schemes. As mentioned in (\ref{Pt 1}), the preamble transmission success probability depends on the non-empty probability ${\cal T}^m$ and the non-restrict probability ${\cal R}^m$ of each IoT device, which raises the problem how to study the queue status of each IoT device in each time slot.

The queue status and the preamble transmission are interdependent, and imposes a causality problem. More specifically, the preamble transmission of a typical IoT device in the current time slot depends on the aggregate interference from those active IoT devices in that time slot, thus we need to know the current queue status, which is decided by the previous queue statuses, as well as the preamble transmission success probabilities of previous time slots. Recall that the evolution of queue status follows Table I, where the accumulated packets come from the packets that are not successfully transmitted in the previous time slots.

Mathematically, to derive the preamble transmission success probability of an randomly chosen IoT device in the $m$th time slot ${\cal P}^m$, we first derive the non-empty probability ${\cal T}^{m}$ and the non-restrict probability ${\cal R}^m$ of the IoT device, which are decided by ${\cal P}^{m-1}$, ${\cal T}^{m-1}$, and ${\cal R}^{m-1}$. As the number and locations of BSs and IoT devices are fixed all time once they are deployed, the locations of active IoT devices are slightly correlated across time. However, this correlation only has very little impact on the distributions of active IoT devices, and thus we approximate the distributions of non-empty IoT devices following independent PPPs in each time slot. 
In the rest of this section, we first describe the general analytical framework used to derive the the non-empty probability ${\cal T}^{m}$ in each time slot, and then delve into the analysis details of the non-restrict probability ${\cal R}^m$ in each time slot for each RACH scheme.

\vspace*{-0.4cm}
\subsection{Non-Empty Probability ${\cal T}^{m}$}
\vspace*{-0.2cm}
In the $1$st time slot, the number of packets in an IoT device only depends on the new packets arrival process ${\Lambda ^1_{\rm New}}$, such that the non-empty probability of each IoT device ${\cal T}^1$ in the $1$st time slot is expressed as
\vspace*{-0.3cm}
\begin{align}\label{THINNING P 1}
{\cal T}^1 = {{\mathbb P} \{ {N^1_{\rm New}} > 0 \}} = 1-e^{-\mu ^1_{\rm New}} ,
\end{align}
where $\mu ^1_{\rm New}$ is the intensity of new arrival packets. Note that the non-restrict probability in the $1$st time slot ${\cal R}^1 = 1$ with the baseline scheme, and for other RACH schemes, ${\cal R}^1$ is determined by their transmission policies, which will be detailed in the following subsection.
Substituting (\ref{THINNING P 1}) and ${\cal R}^1$ into (\ref{Pt 1}), we derive the preamble transmission success probability of a randomly chosen IoT device in the $1$st time slot ${\cal P}^1$.

Next, we derive the non-empty probability and the preamble transmission success probability of a randomly chosen IoT device in the $m$th time slot in the following Theorem.



\vspace*{-0.3cm}
\begin{theorem}\label{theorem2}
The accumulated packets number of an IoT device in any time slot should be approximately Poisson distributed. As such, we approximate the number of accumulated packets in the $m$th time slot $N^{m}_{\rm Cum}$ as Poisson distribution $\Lambda^m_{\rm Cum}$ with intensity $\mu^m_{\rm Cum}$. The intensity of accumulated packets $\mu _{Cum}^m$ ($m>1$) in the $m$th time slot is derived as
\vspace*{-0.3cm}
\begin{align}\label{packets intensity ts3 m}
\mu _{\rm Cum}^m = 
 {\mu _{\rm New}^{m - 1} + \mu _{\rm Cum}^{m - 1} - {\cal R}^{m-1} {\cal P}^{m - 1} \big( {1 - {e^{ - \mu _{\rm New}^{m - 1} - \mu _{\rm Cum}^{m - 1}}}} \big)} .
\end{align}
The non-empty probability of each IoT device in the $m$th time slot is derived as
\vspace*{-0.3cm}
\begin{align}\label{ACTIVE P m}
{\cal T}^{m} = 1 - {{{e}}^{ - \mu _{\rm New}^m - \mu _{\rm Cum}^m}}.
\end{align} 
Substituting ${\cal T}^{m}$ and ${\cal R}^m$ into (\ref{Pt 1}), we derive the preamble transmission success probability of a randomly chosen IoT device in the $m$th time slot ${\cal P}^m$. Note that ${\cal R}^m =1$ with the baseline scheme, and for other RACH schemes, ${\cal R}^m$ are determined by their transmission policies, which will be detailed in the following subsection\footnote{With minor modification, this theorem can also be leveraged to study other traffic models, such as the time limited Uniform Distribution and the time limited Beta distribution \cite{3GPP2011Study}.}.
\end{theorem}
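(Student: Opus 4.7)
The plan is to first accept the Poisson approximation for $N^m_{\rm Cum}$ and derive the non-empty probability as a direct consequence, and then obtain the intensity recursion by taking expectations of the per-slot queue-update rule already tabulated in Table~I.

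Given $N^m_{\rm Cum}\sim\mathrm{Pois}(\mu^m_{\rm Cum})$ and the independent arrival $N^m_{\rm New}\sim\mathrm{Pois}(\mu^m_{\rm New})$, their sum is $\mathrm{Pois}(\mu^m_{\rm Cum}+\mu^m_{\rm New})$, so
\begin{equation*}
\mathcal{T}^m \;=\; 1 - \mathbb{P}\{N^m_{\rm Cum}+N^m_{\rm New}=0\} \;=\; 1 - e^{-\mu^m_{\rm Cum}-\mu^m_{\rm New}},
\end{equation*}
which is exactly (\ref{ACTIVE P m}). For the intensity recursion I would read off from Table~I that, in slot $m-1$, the accumulated count grows by $N^{m-1}_{\rm New}$ and decreases by one precisely when three independent events occur simultaneously: (i) the buffer is non-empty, which has probability $\mathcal{T}^{m-1}=1-e^{-\mu^{m-1}_{\rm Cum}-\mu^{m-1}_{\rm New}}$; (ii) the device is not barred by the RACH scheme, which has probability $\mathcal{R}^{m-1}$; and (iii) its preamble is successfully received, which has probability $\mathcal{P}^{m-1}$ as given by Theorem~\ref{theorem1}. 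Taking expectations of $N^m_{\rm Cum}=N^{m-1}_{\rm Cum}+N^{m-1}_{\rm New}-\mathds{1}_{\{\text{success in }m-1\}}$ and using linearity together with the independence above immediately yields
\begin{equation*}
\mu^m_{\rm Cum} \;=\; \mu^{m-1}_{\rm New}+\mu^{m-1}_{\rm Cum} - \mathcal{R}^{m-1}\mathcal{P}^{m-1}\bigl(1-e^{-\mu^{m-1}_{\rm New}-\mu^{m-1}_{\rm Cum}}\bigr),
\end{equation*}
which is precisely (\ref{packets intensity ts3 m}). Substitution back into Theorem~\ref{theorem1} gives the claimed form for $\mathcal{P}^m$.

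The main obstacle is justifying the Poisson approximation for $N^m_{\rm Cum}$ itself, since the head-of-line departure rule (removing a single packet upon RACH success) formally destroys the Poisson property that the arrivals enjoy. I would defend this approximation along the standard lines used in the spatio-temporal stochastic-geometry queueing literature: the exogenous input is Poisson by assumption; the successful-departure stream is a Bernoulli thinning of the non-empty attempts, which tends to preserve Poisson-like statistics; and, crucially, the intensity of the approximating Poisson is \emph{chosen by moment matching}, so the mean of the approximation agrees with the true expected backlog in every slot via the recursion above. A further independence approximation across devices and across slots is required to decouple $\mathcal{T}^{m-1}$ from the success indicator; this is the same "locations of active devices form an independent thinned PPP in each slot" assumption already flagged in the paragraph preceding the theorem. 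Any residual discrepancy in the higher-order moments will be absorbed into the numerical validation against the realistic simulation framework, and a rigorous total-variation bound is left outside the present scope.
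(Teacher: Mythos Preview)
Your proposal is correct and follows the same core idea as the paper: treat $N^{m}_{\rm Cum}$ as approximately Poisson and fix its intensity by matching the mean of the queue-update rule in Table~I. Your expectation argument
\[
\mu^{m}_{\rm Cum}=\mathbb{E}\big[N^{m-1}_{\rm Cum}+N^{m-1}_{\rm New}\big]-\mathbb{E}\big[\mathds{1}_{\{\text{success in }m-1\}}\big]
=\mu^{m-1}_{\rm Cum}+\mu^{m-1}_{\rm New}-\mathcal{R}^{m-1}\mathcal{P}^{m-1}\mathcal{T}^{m-1}
\]
is exactly the paper's computation in (\ref{packets intensity ts2 1}), only written more compactly via linearity instead of the explicit case split into ``success'' and ``failure/barred'' terms that the paper carries through before simplifying.

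The one substantive thing you omit is the paper's warm-up step: before invoking the Poisson approximation, the paper writes down the \emph{exact} PMF and CDF of $N^{2}_{\rm Cum}$ and $N^{3}_{\rm Cum}$ (equations (\ref{packets pmf ts 2 1})--(\ref{packets cdf ts3 1})), observes that the complexity of these exact expressions blows up with $m$, and then compares the exact CDF, the Poisson-approximate CDF, and simulation in Fig.~\ref{fig:4}. So where you defend the approximation heuristically (thinning preserves Poisson-like behaviour, moment matching, independence across slots), the paper defends it empirically by showing the exact and approximate CDFs coincide for $m=2,3$. Both arrive at the same recursion; the paper's route buys a concrete internal validation of the approximation, while yours is shorter and makes the mean-matching structure transparent.
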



\begin{proof}
We first derive the non-empty probability in each time slot using exact probabilistic statistics. In the $2$nd time slot, the PMF of the accumulated packets $N^{2}_{Cum}$ is expressed as
\vspace*{-0.2cm}
\begin{align}\label{packets pmf ts 2 1}
f_{N^{2}_{\rm Cum}}(x) = \left\{
\begin{aligned}
    & {e^{ - \mu _{\rm New}^1}} + \mu _{\rm New}^1{e^{ - \mu _{\rm New}^1}}{\cal R}^{1}{\cal P}^1  , & x = 0,  \\
    & \frac{{{(\mu _{\rm New}^1)}^x{e^{ - \mu _{\rm New}^1}}}}{{x!}}(1 - {\cal R}^{1}{\cal P}^1) + \frac{{(\mu {_{\rm New}^1})^{x + 1}{e^{ - \mu _{\rm New}^1}}}}{{(x + 1)!}}{\cal R}^{1}{\cal P}^1   , & x > 0 .  \\
\end{aligned} \right.
\end{align}
The reason for (\ref{packets pmf ts 2 1}) is that the number of accumulated packets in the $2$nd time slot equals to $x$ occurs only when 1) the number of accumulated packets in the $1$st time slot equals to $x+1$, and one packet is successfully transmitted in the $1$st time slot, and 2) the number of accumulated packets in the $1$st time slot equals to $x$, and no packet is successfully transmitted in the $1$st time slot.

Based on (\ref{packets pmf ts 2 1}), we derive the CDF of the number of accumulated packets in the $2$nd time slot $N^{2}_{\rm Cum}$ as
\vspace*{-0.4cm}
\begin{align}\label{packets cdf ts2 1}
F_{N^{2}_{\rm Cum}}(y) = \sum\limits_{x = 0}^y { f_{N^{2}_{\rm Cum}}(x) } = \frac{{{{(\mu _{\rm New}^1)}^{y + 1}}{e^{ - \mu _{\rm New}^1}}}}{{(y + 1)!}}{\cal R}^{1}{{\cal P}^1} +  \sum\limits_{x = 0}^y { {\frac{{{{(\mu _{\rm New}^1)}^x}{e^{ - \mu _{\rm New}^1}}}}{{x!}}} } .
\end{align}
We are interested in the zero-accumulated packets probability in the $2$nd time slot, since it determines the density of non-empty IoT devices (with more than one packet in the buffer) in that time slot, and the activity probability of IoT devices. Based on the probabilistic statistics and (\ref{packets pmf ts 2 1}), we present the non-empty probability of IoT devices in the 2$nd$ time slot as
\vspace*{-0.3cm}
\begin{align}\label{ACTIVE P 1}
{\cal T}_{\rm BL}^{2} =  1 - {{{e}}^{ - {\mu _{\rm New}^2}}}  \big( {{e}^{ - {\mu _{\rm New}^1}}} + {\mu _{\rm New}^1}{{e}^{ - {\mu _{\rm New}^1}}}{\cal R}^{1}{{\cal P}^1}\big) .
\end{align}
Substituting (\ref{ACTIVE P 1}) and ${\cal R}^2=1$ into (\ref{Pt 1}), we derive the preamble transmission success probability of a randomly chosen IoT device in the $2$nd time slot ${\cal P}^2$.


Similar as (\ref{packets pmf ts 2 1}) and (\ref{packets cdf ts2 1}), we can derive the PMF and the CDF of the number of accumulated packets in the $3$rd time slot $N^{3}_{\rm Cum}$ as
\vspace*{-0.3cm}
\begin{align}\label{packets pmf ts3 1}
&{f_{N_{\rm Cum}^3}}(x) = \nonumber\\&
 \left\{
\begin{aligned}
&  {e^{ - \mu _{\rm New}^2}}{f_{N_{\rm Cum}^2}}(0) + {\cal R}^{2}{\cal P}^2\left[ {\mu _{New}^2{e^{ - \mu _{\rm New}^2}}{f_{N_{\rm Cum}^2}}(0) + {e^{ - \mu _{\rm New}^2}}{f_{N_{\rm Cum}^2}}(1)} \right]   , & x = 0,  \\
   &  ({1 - {\cal R}^{2}{\cal P}^2})\sum\limits_{z = 0}^x {\Big[ {\frac{{{{\left( {\mu _{\rm New}^2} \right)}^{z}}{e^{ - \mu _{\rm New}^2}}}}{{\left( {z} \right)!}}{f_{N_{\rm Cum}^2}}(x - z)} \Big]} 
   \\
   & +  {\cal R}^{2} {\cal P}^2\sum\limits_{z = 0}^{x + 1} {\Big[ {\frac{{{{\left( {\mu _{\rm New}^2} \right)}^{z }}{e^{ - \mu _{\rm New}^2}}}}{{\left( {z} \right)!}}{f_{N_{\rm Cum}^2}}(x + 1 - z)} \Big]}     , & x > 0 ,
\end{aligned} \right.
\end{align}
and
\vspace*{-0.3cm}
\begin{align}\label{packets cdf ts3 1}
\hspace*{-0.1cm} {F_{N_{\rm Cum}^3}}(y) \hspace*{-0.1cm} = \hspace*{-0.1cm} {\cal R}^{2} {\cal P}^2\sum\limits_{z = 0}^{y + 1} {\Big[ {\frac{{{{\left( {\mu _{\rm New}^2} \right)}^z}{e^{ - \mu _{\rm New}^2}}}}{{\left( z \right)!}}{f_{N_{\rm Cum}^2}}(y + 1 - z)} \Big]} \hspace*{-0.1cm} 
+\hspace*{-0.1cm}  \sum\limits_{x = 0}^y {\sum\limits_{z = 0}^x {\Big[ {\frac{{{{\left( {\mu _{\rm New}^2} \right)}^z}{e^{ - \mu _{\rm New}^2}}}}{{\left( z \right)!}}{f_{N_{\rm Cum}^2}}(x - z)} \Big]} } ,
\end{align}
respectively. In (\ref{packets pmf ts3 1}) and (\ref{packets cdf ts3 1}), ${f_{N_{\rm Cum}^2}}(x)$ is given in (\ref{packets pmf ts 2 1}). Generally, the PMF and CDF of $N^{m}_{\rm Cum}$ in the $m$th time slot can be derived by the iteration process. 


However, as $m$ increases, the complexity of these derivations exponentially increases, and thus they become hard to analyze. Due to the new packets arrival at each IoT device is modeled by independent Poisson process, the packets departure can be treated as an approximated thinning process (i.e., the thinning factor is a function relating to the preamble transmission success probability, the non-empty probability, and the non-restrict probability) of the arrived packets. Therefore, after this thinning process in a specific time slot, the least packets (i.e. the accumulated packets) number at each IoT device can be approximated as Poisson distribution with the same mean. As such, we approximate the number of accumulated packets in the $m$th time slot $N^{m}_{\rm Cum}$ as Poisson distribution $\Lambda^m_{\rm Cum}$ with intensity $\mu^m_{\rm Cum}$. The intensity of accumulated packets $\mu _{Cum}^m$ ($m>1$) in the $m$th time slot is derived as

 As such, we approximate the number of accumulated packets in the $m$th time slot as a Poisson distribution ($m>1$), 
where the number of accumulated packets of an IoT device in the $m$th time slot $N^{m}_{\rm Cum}$ is approximated as Poisson distribution $\Lambda^m_{\rm Cum}$ with intensity $\mu^m_{\rm Cum}$. 
In the $2$nd time slot, $\mu^2_{\rm Cum}$ depends on the new packets arrival rate $\mu^1_{\rm New}$ and the preamble transmission success probability ${\cal P}^1$ of an IoT device in the $1$st time slot, which is given by
\vspace*{-0.3cm}
\begin{align} \label{packets intensity ts2 1}
\mu _{\rm Cum}^2 & =  \underbrace { {\cal R}^{1} {\cal P}^1\big( {\sum\limits_{x = 1}^\infty  {{f_{N_{\rm New}^1}}(x) \cdot (x - 1)} } \big)}_{(a)} + \underbrace { \big( (1- {\cal R}^{1}) + {\cal R}^{1} ( {1 - {\cal P}^1} ) \big){\big( {\sum\limits_{x = 1}^\infty  {{f_{N_{\rm New}^1}}(x) \cdot x} } \big)}}_{(b)}
 \nonumber \\
 &  = {\cal R}^{1} {\cal P}^1\big( {\sum\limits_{x = 1}^\infty  {\frac{{{{\left( {\mu _{\rm New}^1} \right)}^x}{e^{ - \mu _{\rm New}^1}}(x - 1)}}{{x!}}} } \big) + ( {1 - {\cal R}^{1} {\cal P}^1}) \mu _{\rm New}^1
\nonumber \\
&  = {\cal R}^{1} {\cal P}^1\big( {\sum\limits_{x = 0}^\infty  {\frac{{{{\left( {\mu _{\rm New}^1} \right)}^x}{e^{ - \mu _{\rm New}^1}}x}}{{x!}} - \sum\limits_{x = 1}^\infty  {\frac{{{{\left( {\mu _{\rm New}^1} \right)}^x}{e^{ - \mu _{\rm New}^1}}}}{{x!}}} } } \big) + ( {1 - {\cal R}^{1} {\cal P}^1}) \mu _{\rm New}^1
\nonumber\\
&  =  {\mu _{\rm New}^1 - {\cal R}^{1}{\cal P}^1\big( {1 - {e^{ - \mu _{\rm New}^1}}} \big)} ,
\end{align}
where $\mu ^1_{\rm New} = (\tau_c + \tau_g)\varepsilon^1_{\rm New}$, $\varepsilon^1_{\rm New}$ is the new packets arrival rate of each device in the $1$st time slot, ${f_{N_{\rm New}^1}}(\cdot)$ is the PMF of the number of new arrived packets $N_{\rm New}^1$, ${\cal P}^1$ is given in (\ref{Pt 1}) of Theorem 1. In (\ref{packets intensity ts2 1}), $(a)$ is the density of the accumulated packets in the $2$nd time slot when a packet is successfully transmitted in the $1$st time slot, and $(b)$ is the density of the accumulated packets in the $2$nd time slot when the congestion alleviation or the unsuccess transmission occurs in the $1$st time slot.

According to Poisson approximation and (\ref{packets intensity ts2 1}), the CDF of the number of packets in the $2$nd time slot due to previous accumulated packets $N_{\rm Cum}^2$ is approximated as
\vspace*{-0.2cm}
\begin{align}\label{packets CDF g 1}
\hspace{-0.3cm} 
{F_{N_{\rm Cum}^2}}(y) &\approx \sum\limits_{z = 0}^y {\frac{{{{\left( {\mu _{\rm Cum}^2} \right)}^z}{e^{ - \mu _{\rm Cum}^2}}}}{{z!}}}  = \sum\limits_{z = 0}^y {\frac{{{{ {\big( {\mu _{\rm New}^1 - {\cal R}^{1} {\cal P}^1\big( {1 - {e^{ - \mu _{\rm New}^1}}} \big)} \big)} }^z}{e^{ -  {\mu _{\rm New}^1 - {\cal R}^{1} {\cal P}^1\big( {1 - {e^{ - \mu _{\rm New}^1}}} \big)} }}}}{{z!}}},
\end{align}
and the non-empty probability of an IoT devices in the $2$nd time slot is approximated as
\vspace*{-0.3cm}
\begin{align}\label{ACTIVE P 2}
{\cal T}^{2} \approx 1 - {{{e}}^{ - \mu _{\rm New}^2 - \mu _{\rm Cum}^2}},
\end{align}
where $\mu _{\rm Cum}^2$ is given in (\ref{packets intensity ts2 1}).

Similarly, the intensity of the number of accumulated packets in the $3$rd time slot $\mu^3_{\rm Cum}$ is
\vspace*{-0.3cm}
\begin{align}\label{packets intensity ts3 1}
 \mu _{\rm Cum}^3 &   = {\mu _{\rm New}^2 + \mu _{\rm Cum}^2 - {\cal R}^{2}{\cal P}^2\big( {1 - {e^{ - \mu _{\rm New}^2 - \mu _{\rm Cum}^2}}} \big)} ,
\end{align}
where $\mu _{\rm Cum}^3$ is given in (\ref{packets intensity ts3 1}). Thus, we approximate the CDF of the number of accumulated packets in the $3$nd time slot $N_{\rm Cum}^3$ as
\vspace*{-0.2cm}
\begin{align}\label{packets CDF g 2}
{F_{N_{\rm Cum}^3}}(y) \approx \sum\limits_{z = 0}^y {\frac{{{{\left( {\mu _{\rm Cum}^3} \right)}^z}{e^{ - \mu _{\rm Cum}^3}}}}{{z!}}} .
\end{align}
The intensity of the number of accumulated packets in the $m$th time slot ($m>3$) is derived following (\ref{packets intensity ts2 1}), which is already given in (\ref{packets intensity ts3 m}). For simplicity, we omit this expression here.

 \vspace*{-0.6cm}
 \captionsetup{singlelinecheck=false} 
\begin{figure}[H]      
    \begin{center}
        \includegraphics[width=0.55\textwidth]{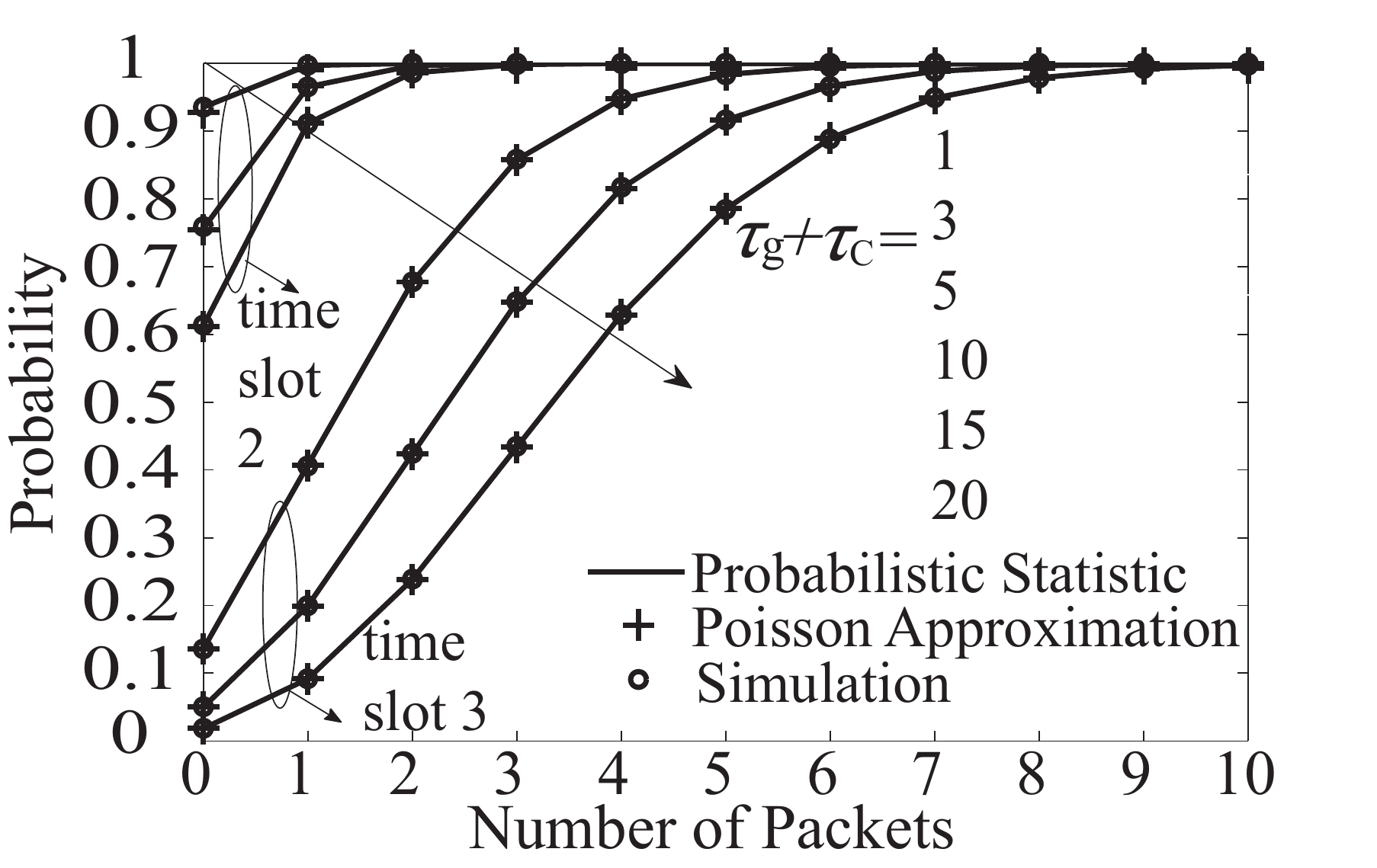}
        \vspace*{-0.3cm}
        \caption{Comparing the CDFs of the number of accumulated packets between probabilistic statistics and Poisson approximation in the $2$nd and the $3$rd time slots. We present 6 scenarios with different RACH interval durations, where $(\tau_c+\tau_g) = 1,3,5,10,15$ and $20$ ms. The simulation parameters are $\lambda_{B} = 10$ BS/km$^2$, $\lambda_{Dp} = 100$ IoT deivces/preamble/km$^2$, $\rho = -90$ dBm, ${\sigma ^2 = −90 }$ dBm, $\varepsilon^1_{\rm New}=\varepsilon^2_{\rm New}=\varepsilon^3_{\rm New}=0.1$ packets/ms, and the baseline scheme with ${\cal R}^m =1$.}
   \label{fig:4}    
    \end{center}
\end{figure}
 \vspace*{-0.6cm}
 

Fig. \ref{fig:4} shows the CDFs of the number of accumulated packets via simulation, as well as calculating by the probabilistic statistics and the Poisson approximation. We see the close match among the probabilistic statistics, Poisson approximation and the simulation results, which validates our approximation approach. More simulation results will be provided in the Section V to validate the Poisson approximation approach. 

\end{proof}

\vspace*{-0.4cm}
\subsection{Non-Restrict Probability ${\cal R}^{m}$}
\vspace*{-0.2cm}
\subsubsection{The Baseline Scheme} The baseline scheme allows each IoT device to attempt RACH immediately when there exists packet in the buffer, and thus the non-restrict probability is always equal to 1 in any time slot (${\cal R}_{\rm BL}^m = 1$). 
\subsubsection{The ACB Scheme} In the ACB scheme, the BS first broadcasts the ACB factor $\rm P_{ACB}$, then each non-empty IoT device draws a random number $q∈ [0, 1]$, and attempts to RACH only when $q$ smaller than or equal to the ACB factor $\rm P_{ACB}$. Therefore, the non-restrict probability is always equal to $\rm P_{ACB}$ in any time slot (${\cal R}_{\rm ACB}^m = {\rm P_{ACB}}$). 
\subsubsection{The Back-Off Scheme} In the back-off scheme, each IoT device defers its access and waits for $t_{\rm BO}$ time slots, when such IoT devices failed to transmit a packet in the last time slot. The analysis of the non-restrict probability with the back-off scheme ${\cal R}_{\rm BO}^m$ is similar to the ACB scheme, due to the back-off procedure can be visualised as a group of IoT devices are completely barred in a specific time slot. In the $1$st time slot, none of IoT device defers the access attempt, such that the transmission procedure is same as the baseline scheme (${\cal R}_{\rm BL}^1 = 1$). After the $1$st time slot, the back-off procedure starts to execute, an non-empty IoT device defers its access attempt if the back-off being trigged. 

Due to the back-off mechanism, only active IoT devices without RACH attempt failures in the last $t_{\rm BO}$ time slots can attempt to transmit a preamble, and only those IoT devices generate interference that determine the preamble transmission success probability in the $m$th time slot. The non-restrict probability with the back-off scheme ${\cal R}_{\rm BO}^m$ is derived as
\vspace*{-0.2cm}
\begin{align}\label{AR BO 1}
{\cal R}_{\rm BO}^m = \left\{
\begin{aligned}
&1 , &m=1,\\
& 1-{{\Big[ {\sum\limits_{j = 1}^{m - 1}  \underbrace {(1 - {\cal P}_{\rm BO}^{j}){\cal T}_{\rm BO}^{j} {{\cal R}_{\rm BO}^{j}}}_{(a)} } \Big]} {{\cal T}_{\rm BO}^m}}
,&\left( {t_{\rm BO} + 1} \right) \ge m > 1,  \\
& 1-{{\Big[ \sum\limits_{j = m - t_{\rm BO}}^{m - 1} \underbrace{(1 - {\cal P}_{\rm BO}^{j}){\cal T}_{\rm BO}^{j}{{\cal R}_{\rm BO}^{j}}}_{(a)}  \Big]} {{\cal T}_{\rm BO}^m}} ,& m > t_{\rm BO},  \\
\end{aligned}\right.
\end{align}
where $(a)$ is the probability that an randomly chosen IoT device fails to transmit a preamble in the $j$th time slot, and thus this IoT device would defer its RACH request in the $m$th time slot due to the back-off mechanism.

\section{Performance Metrics}
We have derived the preamble transmission success probability in each time slot in the last section, and then based on the derived probability, many performance metrics can be obtained.
\vspace*{-0.1cm}
\subsection{The Number of Received Packets per BS}
\vspace*{-0.2cm}
We first analyze the number of received packets per BS of cellular-based mIoT networks as a function of the densities of IoT devices using same preamble and BS, which reflects the density of successfully RACH IoT devices using same preamble per BS (\cite{S2013Downlink}, {e.q. (6)}). In our model, the number of received packets per BS in the $m$th time slot ${\cal C}^m$ is defined as
\vspace*{-0.2cm}
\begin{align}\label{Ct 1}
{\cal C}^m  \buildrel \Delta \over =  {{{{\cal T}^m}{{\cal R}^m}{\lambda _{Dp}}}}{{}} \cdot {{\cal P}^m} / {\lambda _B}.
\end{align}
Substituting (\ref{Pt 1}) into (\ref{Ct 1}), the number of received packets per BS ${\cal C}^m$ is derived as
\vspace*{-0.1cm}
\begin{align}\label{Ct 2}
 &{{\cal C}^m} \hspace*{-0.1cm} = \hspace*{-0.1cm} \frac{{{{\cal T}^m}{{\cal R}^m}{\lambda _{Dp}}}}{{{\lambda _B}}} \exp \Big( { - \frac{{{\gamma _{\rm th}} {\sigma ^{\rm{2}}}}}{\rho } - { 2{(\gamma_{\rm th})  ^{\frac{2}{\alpha }}}\frac{{{\cal T}^m}{{\cal R}^m}\lambda_{Dp} }{\lambda_B }\int_{{{(\gamma_{\rm th})}^{\frac{{ - 1}}{\alpha }}}}^\infty  {\frac{y}{{1 + {y^\alpha }}}dy} }} \Big)\ {{{\Big( {1 + \frac{{{{{\cal T}^m}{{\cal R}^m}\lambda _{Dp}}{\gamma _{\rm th}}}}{{c{\lambda _B}(1 + {\gamma _{\rm th}})}}} \Big)}^{ -c-1}}} \hspace*{-0.2cm} .
\end{align}
In (\ref{Ct 1}), ${\cal C}^m$ is negatively proportional to the density ratio ${\lambda_B }$, and in (\ref{Pt 1}), the preamble transmission success probability ${\cal P}^m$ is positively proportional to the density ratio ${\lambda_B }$, which positively improves ${\cal C}^m$. Therefore, ${\lambda_B }$ introduce a tradeoff in the system performance of ${\cal C}^m$, which is jointly determined by two opposite factors: 1) the average received SINR of each BS, 2) the average number of associated IoT devices of each BS. Practically, when BSs are deployed with a relatively large density, rare IoT devices can successfully transmit a preamble to their associated BSs, due to the large interference leading to extremely low received SINR. In this scenario, ${\cal C}^m$ is dominantly determined by the factor 1 (i.e., average received SINR of each BS), and thus increasing the BS intensity $\lambda_B$ can greatly improve the number of received packets per BS. However, increasing the BS intensity increases the received SINR, but decreases the average number of associated IoT devices, which contributes to higher number of received packets per BS in the scenario of overloaded network, but decreases the that in the scenario of non-overloaded network due to low utilization of channel resources (i.e., the factor 2 dominantly determined ${\cal C}^m$ in this scenario). Therefore, ${\cal C}^m$ is concave downward, and there exists a optimal BS density deployment which enables the maximum number of received packets per BS as shown in (\ref{Ct max 1}).

To obtain the optimal number of received packets per BS in proposed IoT-enabled cellular network, we take the first derivative on ${C^m}$, and obtain the density of BSs achieving the maximum number of received packets per BS $\lambda^*_{B}$ as
\vspace*{-0.1cm}
\begin{align}\label{Ct max 1}
 {\lambda^*_{B}}  =& \frac{{{{{{\cal T}^m}{{\cal R}^m} \lambda }_{Dp}}}}{2}\Bigg( {{ 2{(\gamma_{\rm th})  ^{\frac{2}{\alpha }}}\int_{{{(\gamma_{\rm th})}^{\frac{{ - 1}}{\alpha }}}}^\infty  {\frac{y}{{1 + {y^\alpha }}}dy} } +  \frac{{{\gamma _{\rm th}}}}{{\left( {1 + {\gamma _{\rm th}}} \right)}} + } \Bigg.
 \nonumber \\
 &\Bigg. \hspace*{-0.1cm} {\sqrt {{ \hspace*{-0.1cm} \Big( \hspace*{-0.1cm} { 2{(\gamma_{\rm th})  ^{\frac{2}{\alpha }}} \hspace*{-0.2cm} \int_{{{(\gamma_{\rm th})}^{\frac{{ - 1}}{\alpha }}}}^\infty  {\frac{y}{{1 +  {y^\alpha }}}dy} } \Big) ^2} \hspace*{-0.1cm} + \hspace*{-0.1cm} \frac{{{\gamma _{\rm th}}^2}}{{{{\left( {1 + {\gamma _{\rm th}}} \right)}^2}}} \hspace*{-0.1cm} + \hspace*{-0.1cm} ( {4 + \frac{8}{c}} ) { \Big( {\int_{{{(\gamma_{\rm th})}^{\frac{{ - 1}}{\alpha }}}}^\infty  {\frac{y}{{1 + {y^\alpha }}}dy} } \Big)}\frac{{{(\gamma_{\rm th})  ^{\frac{\alpha+2}{\alpha }}}}}{{\left( {1 + {\gamma _{\rm th}}} \right)}}} } \Bigg) .
\end{align}

\subsection{Mean of ${\cal C}^m$ and ${\cal P}^m$}

The number of received packets per BS in the $m$th time slot ${\cal C}^m$ is derived by using ${\cal P}^m$ following (\ref{Ct 2}). Next, we derive the mean of preamble transmission success probabilities of a randomly chosen IoT device over $M$ time slots and the mean of number of received packets per BS over $M$ time slots, which are expressed as
\vspace*{-0.15cm}
\begin{align}\label{aPt m}
{\mathbb E}[{\cal P}^m] = \big( {\sum\limits_{m = 1}^M {\cal P}^m} \big)/{M} \text{ ,and }  {\mathbb E}[{\cal C}^m] \hspace*{-0.1cm} = \hspace*{-0.1cm} \big( {\sum\limits_{m = 1}^M {\cal C}^m}\big) /{M} .
\end{align}

\vspace*{-1.1cm}
\subsection{Average Queue Length}
\vspace*{-0.1cm}
The preamble transmission success probability provides insights on the received SINR for a random IoT device in each time slot, but does not evaluate the packets accumulation status. Many previous works have indicated that the queue length is a good indication of network congestion \cite{hasan2013random,laya2014random}. The queue length refers to the number of packets that are waiting in buffer to be transmitted \cite{alfa2010queueing}. Next, we evaluate the average queue length ${\mathbb E}[Q^m]$, which denotes the average number of packets accumulated in the buffer in the $m$th time slot, which is derived as
\vspace*{-0.25cm}
\begin{align}\label{t 1 m 71}
& {\mathbb E}[Q^m] = \mu _{\rm New}^{m} + \mu _{\rm Cum}^{m} - {{\cal R}}^{m} {\cal T}^{m}{\cal P}^{m} ,
\end{align}
where $\mu _{\rm Cum}^{m}$ is the intensity of number of accumulated packets in the $m$th time slot given in (\ref{packets intensity ts3 m}), $\mu _{\rm New}^{m}$ is the intensity of the new arrival packets in the $m$th time slot, ${\cal P}^{m} $ is given in Theorem \ref{theorem1}, ${\cal T}^{m}$ is given in Theorem \ref{theorem2}, and ${{\cal R}}^{m} $ is given in Section IV.B.

\section{Numerical Results}

In this section, we validate our analysis via independent system level simulations, where the BSs and IoT devices are deployed via independent PPPs in a 100 km$^2$ area. Each IoT device employs the channel inversion power control, and associated with its nearest BS. Importantly, the real buffer at each IoT device is simulated to capture the packets arrival and accumulation process evolved along the time. The received SINR of each active and non-deferred IoT device (i.e., IoT devices with packets and do not deferred by the ACB or the back-off mechanism) in each time slot is captured, and compared with the SINR threshold $\gamma_{\rm th}$ to determine the success or failure of each RACH attempt. Furthermore, in the ACB scheme, we also simulate that each IoT device generates a random number $q∈ [0, 1]$ and compares with the ACB factor $\rm P_{ACB}$ to determine whether the current RACH is deferred, and in the back-off scheme, we capture all RACH failures and practically defer RACH attempts of these IoT devices for the next $t_{\rm BO}$ time slots.
In all figures of this section, we use “Ana.” and “Sim.” to abbreviate “Analytical” and “Simulation”, respectively. Unless otherwise stated, we set the same new packets arrival rate for each time slot ($\varepsilon^1_{\rm New}=\varepsilon^2_{\rm New}=\cdots=\varepsilon^m_{\rm New}=0.1$ packets/ms), $\rho = -90$ dBm, ${\sigma ^2 = −90 }$ dBm, $\lambda_{B} = 10$ BS/km$^2$, $\lambda_{Dp} = 100$ IoT deivces/preamble/km$^2$, $\alpha = 4$, and $\gamma_{\rm th}=-10$ dB. In the back-off scheme, we set that failure transmission IoT device waits 1 time slot before retransmission in the back-off scheme.

Fig. \ref{fig:5} plots the preamble transmission success probability $\cal P$ versus the density ratio $\lambda_{Dp}/\lambda_B$ for various path-loss exponents ($\alpha$) and various time duration ($\tau_c+\tau_g$), where the analytical plots of the preamble transmission success probability in a single time slot ${\cal P}$ is calculated using (\ref{Pt 1}) (${\cal R} = 1$). We first see the well match between the analysis and the simulation results, which validates the accuracy of developed single time slot mathematical framework. We observe that increasing the density ratio between the IoT devices and the BSs decreases the preamble transmission success probability of the $1$st time slot, due to the increasing aggregate interference from more IoT devices transmitting signals simultaneously. We also notice that increasing the interval duration between RACHs decreases the preamble transmission success probability. This can be explained by the reason that the number of new arrival packets during longer interval duration increases, and leads to higher non-empty probability of IoT devices as shown in (\ref{THINNING P 1}).

\vspace*{-0.6cm}
\begin{figure}[htbp!]
    \begin{center}
    \begin{minipage}[t]{0.46\textwidth}
    \centering
        \includegraphics[width=1\textwidth]{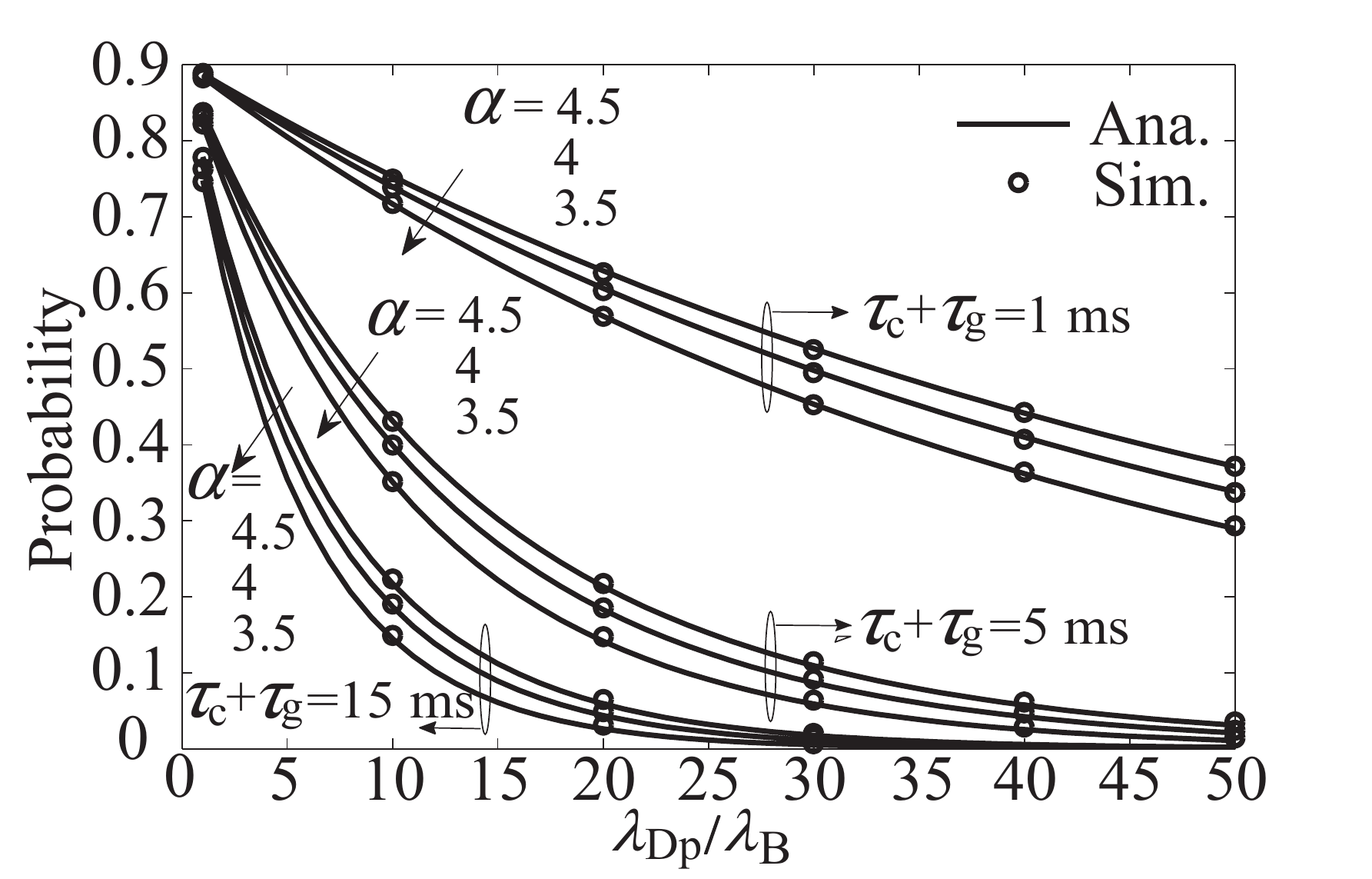}
        \vspace*{-1.2cm}
        \caption{\scriptsize Preamble transmission success probability.}
            \label{fig:5}
    \end{minipage}
    \hspace*{+1cm}
        \begin{minipage}[t]{0.45\textwidth}
    \centering
        \includegraphics[width=1\textwidth]{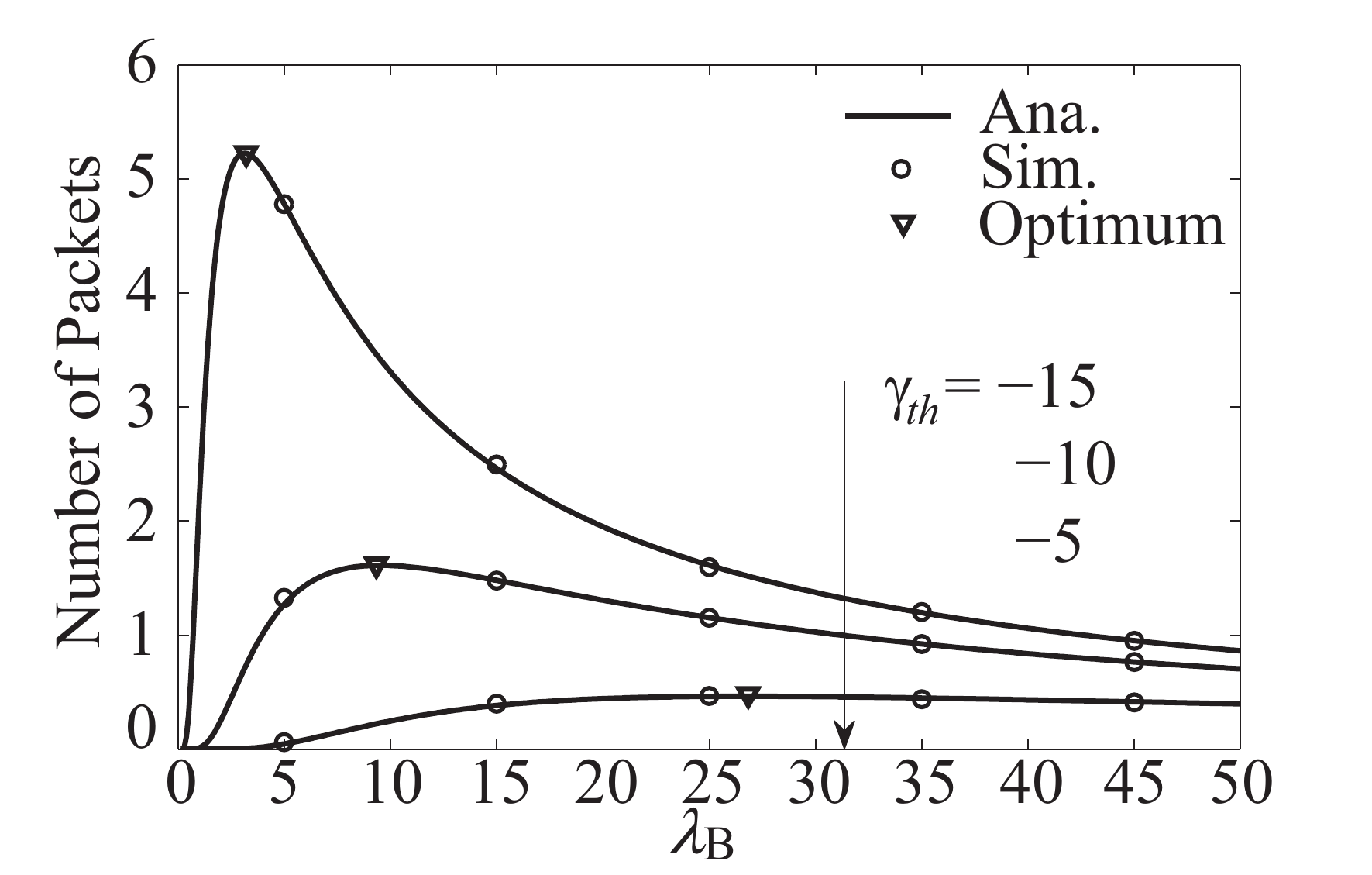}
        \vspace*{-1.2cm}
        \caption{\scriptsize The number of received packets per BS.}
                \label{fig:6}
        \end{minipage}
    \end{center}
\end{figure}
\vspace*{-0.9cm}

Fig. \ref{fig:6} plots the number of received packets per BS ${\cal C}$ in a single time slot versus the density of BSs $\lambda_B$ for various SINR threshold $\gamma_{\rm th}$ (${\cal R} = 1$). We set $\lambda_{Dp} = 500$ IoT deivces/preamble/km$^2$. The analytical curves for the number of received packets per BS are plotted using (\ref{Ct 2}), and the optimal BSs densities that achieve the maximum number of received packets per BS are plotted using (\ref{Ct max 1}). We can see that the calculated optimal BS densities well predict the optimal density points achieving the maximum number of received packets per BS. The first increasing trend of the number of received packets per BS is mainly due to the improvement of the average received SINR, whereas the decreasing trend after $\lambda^*_{B}$ is mainly due to the decreased average number of associated IoT devices of each BS leading to the reduction in channel resources utilization.

Fig. \ref{fig:7} plots the preamble transmission success probabilities of a random IoT device in each time slot with the baseline scheme, the ACB scheme, and the back-off scheme using (\ref{Ct max 1}). For each scheme, the preamble transmission success probabilities decrease with increasing time, due to that the intensity of interfering IoT devices grows with increasing non-empty probability of each IoT device, caused by the increasing average number of accumulated packets. For each scheme, its preamble transmission success probability with $\gamma_{\rm th}=-5$ dB decreases faster than that with $\gamma_{\rm th}=-10$ dB, due to the higher chance of the accumulated packets being reduced for $\gamma_{\rm th}=-10$ dB leading to relatively lower average non-empty probability of each IoT device. Interestingly, we observe that the preamble transmission success probabilities of a random IoT device in each time slot always follow ACB(${\rm P}_{\rm ACB}= 0.5$)$>$back-off$>$ACB(${\rm P}_{\rm ACB}= 0.9$)$>$baseline scheme (except the $1$st time slot, where the back-off procedure is not executed), this is because more strict congestion control schemes reduce more access requests from the side of IoT devices, which decrease the aggregate interference in the network.

\vspace*{-0.6cm}
\begin{figure}[htbp!]
    \begin{center}
        \begin{minipage}[t]{0.46\textwidth}
    \centering
        \includegraphics[width=1\textwidth]{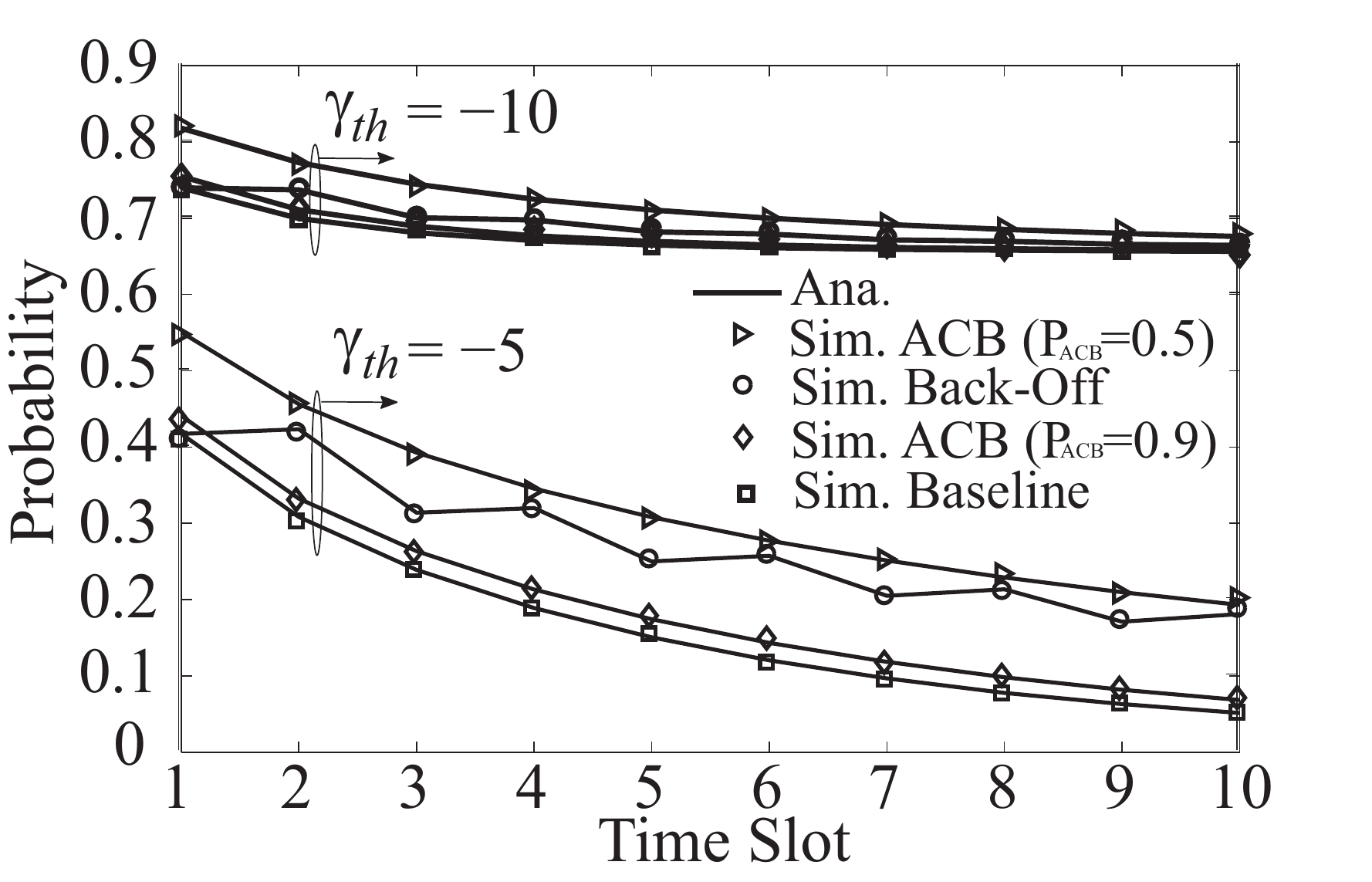}
        \vspace*{-1.2cm}
        \caption{\scriptsize Preamble transmission success probability of each time slot.}
                \label{fig:7}
        \end{minipage}
    \hspace*{+1cm}
            \begin{minipage}[t]{0.46\textwidth}
    \centering
        \includegraphics[width=1\textwidth]{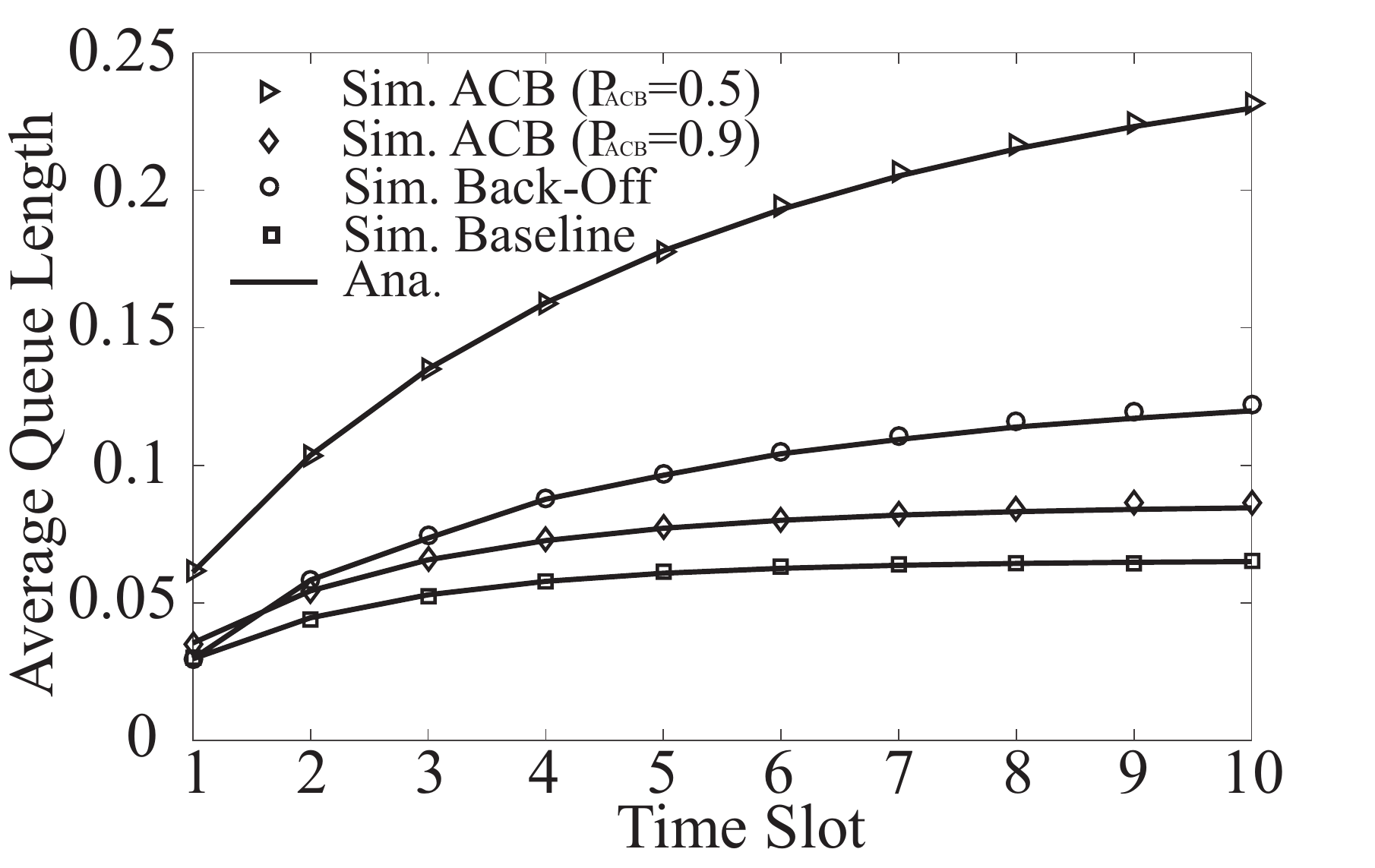}
        \vspace*{-1.2cm}
        \caption{\scriptsize Average Queue Length of each time slot.}
            \label{fig:8}
    \end{minipage}
    \end{center}
\end{figure}
\vspace*{-0.9cm}

We also notice that for $\gamma_{\rm th} =-10$ dB case, the preamble transmission success probabilities with the ${\rm P}_{\rm ACB}= 0.5$ slightly outperform that of ACB scheme (${\rm P}_{\rm ACB}= 0.9$), and the gap between them reduces with increasing time, whilst for $\gamma_{th}=-5$ dB, the preamble transmission success probabilities with ${\rm P}_{\rm ACB}= 0.5$ is much greater than that with ${\rm P}_{\rm ACB}= 0.9$, and such gap increases with increasing time. This is because for $\gamma_{\rm th}=-5$ dB, the ACB scheme ${\rm P}_{\rm ACB}= 0.5$ is more efficient than ${\rm P}_{\rm ACB}= 0.9$, in terms of providing higher average SINR by reducing the probability of queue flushing, but reversely for $\gamma_{\rm th}=-10$ dB, the ACB scheme (${\rm P}_{\rm ACB}= 0.5$) has less access requests leading to lower utilization of channel resources. The preamble transmission success probability of a randomly chosen IoT device with back-off scheme is fluctuated, due to the alternation of high load and low load network condition in each time slot. Furthermore, for $\gamma_{\rm th}=-10$ dB case, the fluctuation become stable quickly, due to the accumulated packets can be handled much quicker.

In Fig \ref{fig:8}, we plots the average queue length with $\gamma_{\rm th}=-10$ dB using (\ref{t 1 m 71}). We observe that the average queue length of the baseline scheme, the ACB(${\rm P}_{\rm ACB}= 0.9)$ scheme and the back-off scheme gradually becomes steady (i.e., they become unchanging in the $10$th time slot). This is due to that these schemes provides relatively faster buffer flushing that can maintain the average accumulated packets in an acceptable level. The average queue lengths follow baseline$<$ACB(${\rm P}_{\rm ACB}= 0.9)<$back-off$<$ACB(${\rm P}_{\rm ACB}= 0.5$) scheme, which shed lights on the buffer 
flushing capability of each scheme in this network condition.


\vspace*{-0.6cm}
\begin{figure}[htbp!]
    \begin{center}
    \subfigure[\scriptsize $\gamma_{th} = -8$ dB]{\includegraphics[width=0.46\textwidth]{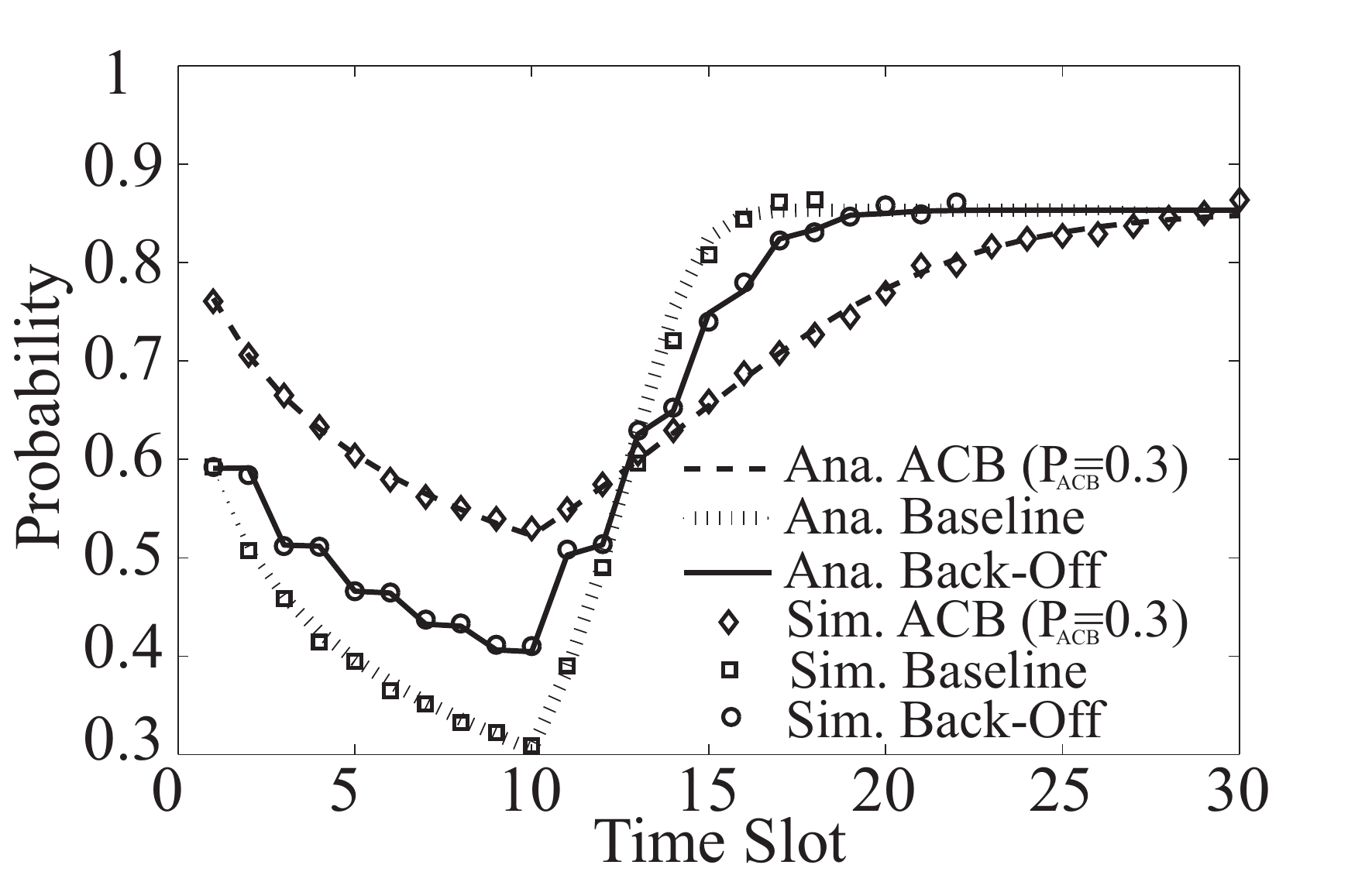}}
    \subfigure[\scriptsize $\gamma_{th} = -6$ dB]{\includegraphics[width=0.42\textwidth]{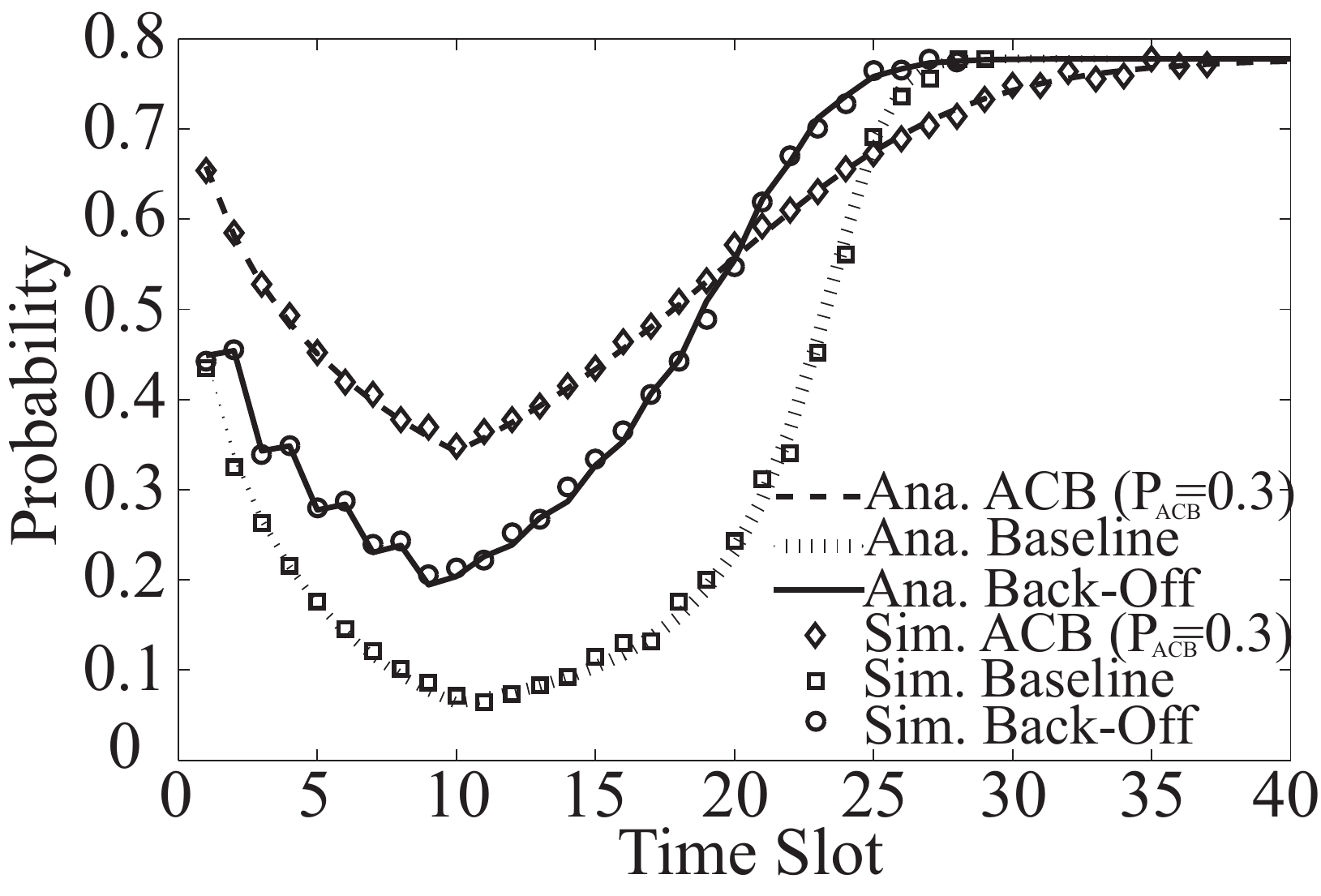}}
    \vspace*{-0.5cm}
        \caption{\scriptsize Preamble transmission success probability of each time slot.}
        \label{fig:9}
    \end{center}
\end{figure}
\vspace*{-0.9cm}

Fig. \ref{fig:9} plots the preamble transmission success probability of a random IoT device in each time slot with the baseline scheme, the ACB scheme, and the back-off scheme. We set $\tau_c+\tau_g=5$ ms, ACB factor ${\rm P}_{\rm ACB}= 0.3$, and new arrival traffics only happen in the first 10 time slots ($\varepsilon^m_{\rm New}=0$ for $m>10$). Note that this simulation method with new arrival traffics happen in first several time slots is to examine how well the network can handle bursty traffic, where similar practical simulations has been tested in \cite{3GPP1046632010RACH,3GPP1037422010RACH}. In both Fig. \ref{fig:9}(a) and Fig. \ref{fig:9}(b), the preamble transmission success probabilities decrease in the first 10 time slots, due to increasing traffic (new packets arrived) leading to increasing active probabilities of IoT devices. After first 10 time slots, these probabilities increase with time, due to decreasing traffic (i.e., no new packets arrive) leading to decreased active probabilities of IoT devices. After most of the accumulated packets are delivered with time, the preamble transmission success probabilities reaches the stable ceiling. Interestingly, we see that the preamble transmission success probabilities in Fig. \ref{fig:9}(a) ($\gamma_{\rm th}=-8$ dB) become stable earlier than that in Fig. \ref{fig:9}(b) ($\gamma_{\rm th}=-6$ dB), due to that the higher chance of the accumulated packets being reduced in lower threshold case.

The preamble transmission success probability of the baseline scheme increases rapidly after first $10$ time slots and outperforms other two schemes after first $12$th time slots in Fig. \ref{fig:9}(a), but it increases relatively slowly after first $10$ time slots and only outperforms that of the ACB scheme after first $25$ time slots in Fig. \ref{fig:9}(b), due to that the baseline scheme provide faster buffer flushing, which leads to lower chance of the accumulated packets being reduced in relatively higher loaded network condition due to the high aggregate interference. The back-off scheme performs better than the baseline scheme in the first $10$ time slots (except $1$st time slot where back-off is not executed), due to that it automatically defers the retransmission requests and control the congestion in the overloaded network condition. Interestingly, it gradually outperforms the ACB scheme with strictly ACB factor ${\rm P}_{\rm ACB}= 0.3$ after the first $10$ time slots, due to that the back-off scheme automatically release the blocking of packets and provide faster buffer flushing than the ACB scheme in the non-overloaded network condition.


\vspace*{-0.6cm}
\begin{figure}[htbp!]
    \begin{center}
        \subfigure[\scriptsize The mean of preamble transmission success probabilities]{\includegraphics[width=0.46\textwidth]{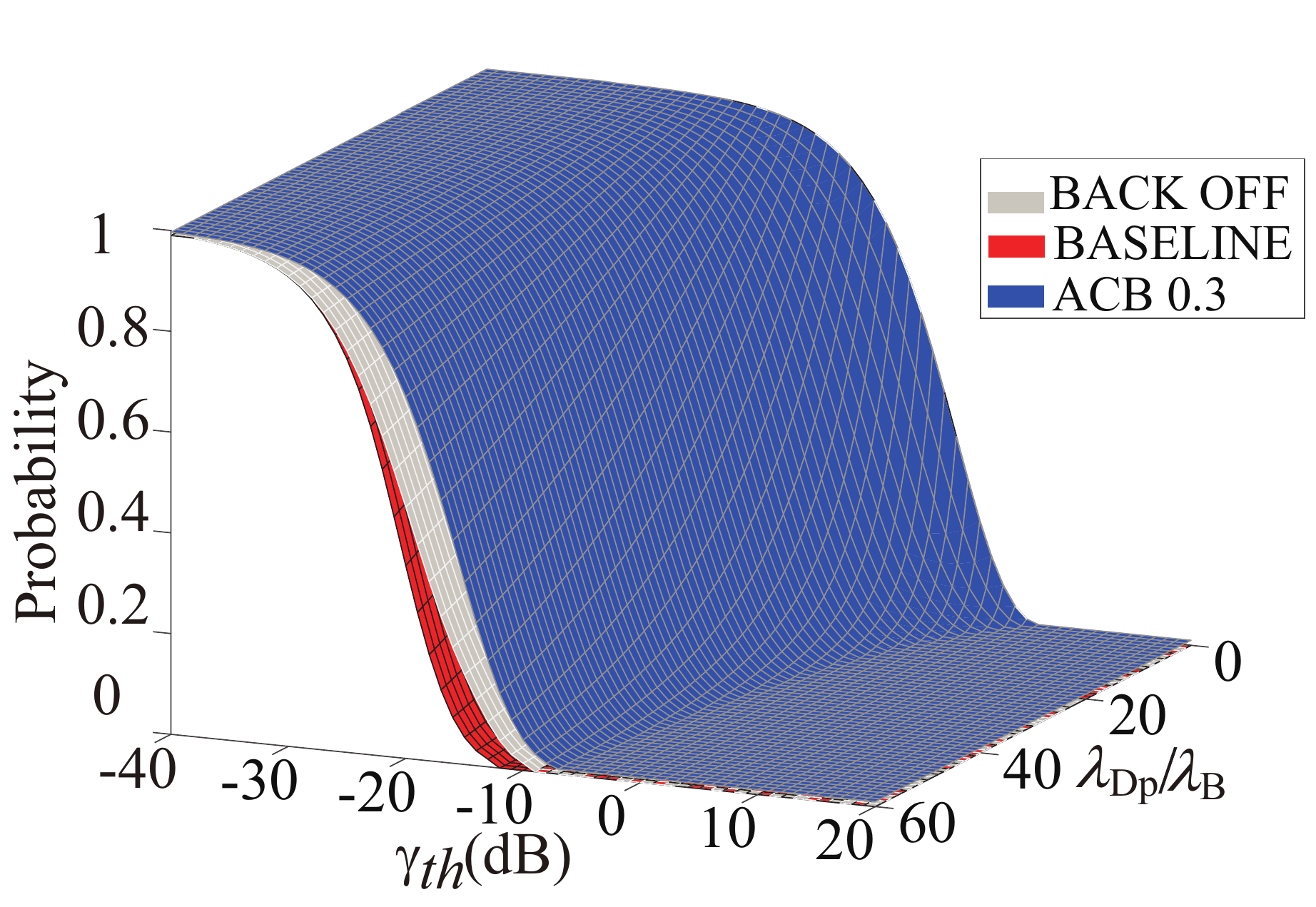}}
        \hspace*{+1cm}
         \subfigure[\scriptsize The mean of transmission capacities per BS per preamble]{\includegraphics[width=0.46\textwidth]{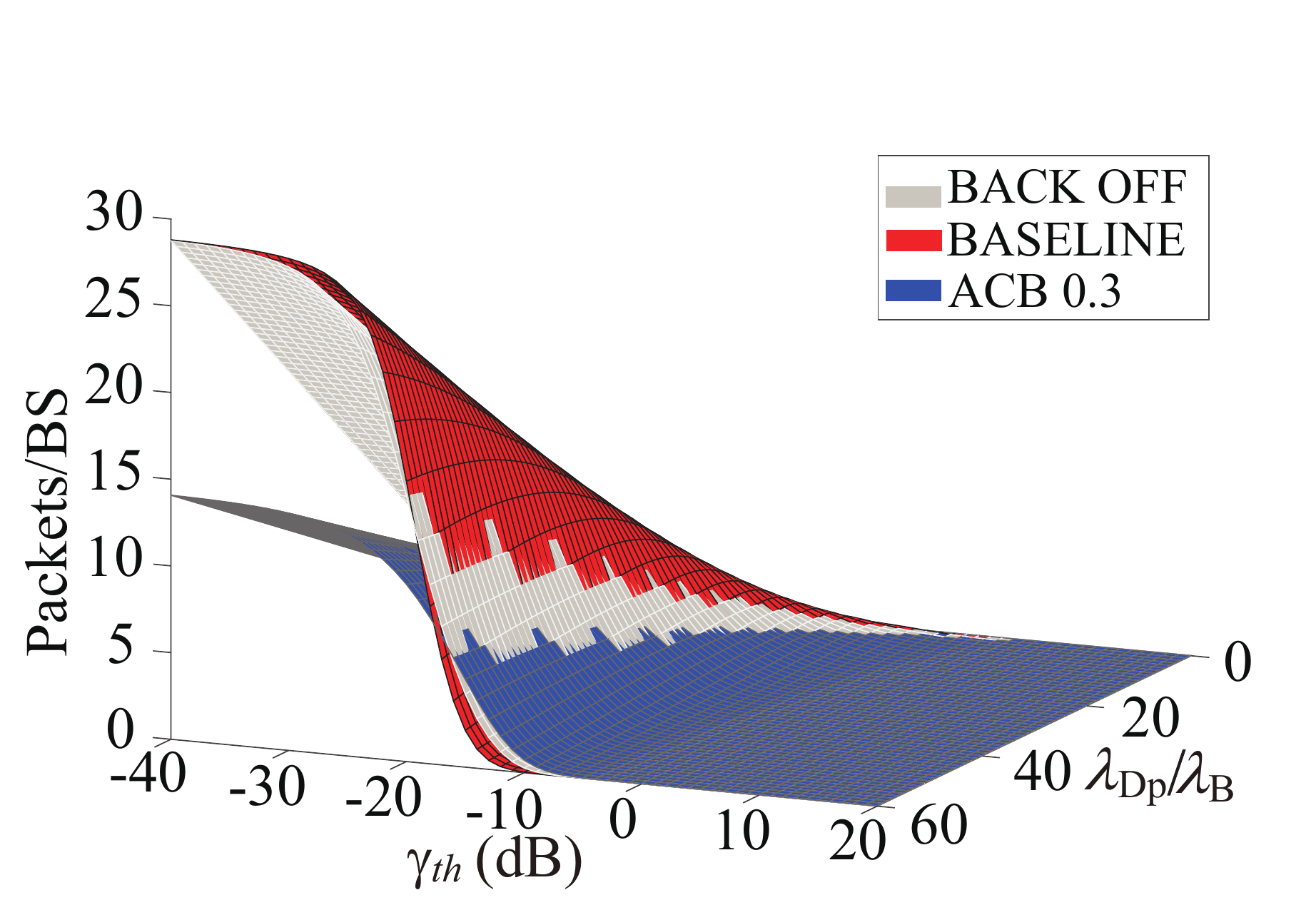}}
         \vspace*{-0.4cm}
        \caption{\scriptsize The mean of preamble transmission success probabilities and the transmission capacities per BS per preamble}
        \label{fig:10}
    \end{center}
\end{figure}
\vspace*{-0.9cm}

In Fig. \ref{fig:10}(a) and Fig. \ref{fig:10}(b), we plot the mean of preamble transmission success probabilities and the mean of numbers of received packets per BS over 10 time slots with each scheme, respectively. We set $\tau_g=1$ ms and ACB factor ${\rm P}_{\rm ACB}= 0.3$. Note that the new traffics arrival happen in every time slot. In Fig. \ref{fig:10}(a), the ACB scheme always outperforms the other two schemes, and the mean of probabilities of the back-off scheme is slightly higher than that of the baseline scheme before $\gamma_{\rm th} = -25$ dB, and then such gap between the back-off scheme and the baseline scheme increase with increasing $\gamma_{\rm th}$, which is due to that the back-off scheme blocks more packets.

In Fig. \ref{fig:10}(b) we observe that 1) For $-40 \le \gamma_{\rm th} \le -25$ dB, the mean of numbers of received packets per BS with the back-off scheme is slightly lower than the baseline scheme, but nearly double that of the ACB scheme, due to the preamble transmission success probability is close to 1 as shown in Fig. \ref{fig:10}(a), and thus less packets are blocked in the IoT device in the back-off scheme. 2) For $-25 < \gamma_{\rm th} \le -15$ dB, the mean of numbers of received packets per BS with the baseline and the back-off schemes decrease dramatically and reduce to same level with the ACB scheme. The back-off scheme gradually outperforms the baseline scheme around the $\gamma_{\rm th} = -20$ and $-15$ dB, because the back-off scheme gradually blocks more IoT devices, and provides better network condition as well as higher probabilities of removing packets from the queue. 3) For $-15 < \gamma_{\rm th} \le -5$ dB, the ACB scheme outperforms the other schemes, which showcases that the ACB scheme with a relatively strict ACB factor can provide improved successful transmission in overloaded network.


\vspace*{-0.5cm}
\section{Conclusion}
\vspace*{-0.2cm}
In this paper, we developed a spatio-temporal mathematical model to analyze the RACH of cellular-based mIoT networks. We first analyzed RACH in the single time slot, and provide the preamble detection probability performed on a randomly chosen BS, preamble transmission success probability performed on a BS associated with a randomly chosen IoT device. We then derived the preamble transmission success probabilities of a randomly chosen IoT device with baseline, ACB, and back-off schemes by modeling the queue evolution over different time slot. Our numerical results show that the ACB and back-off schemes outperform the baseline scheme in terms of the preamble transmission success probability. We also show that the baseline scheme outperforms the ACB and back-off schemes in terms of the number of received packets per BS for light traffic, and the back-off scheme performs closing to the optimal performing scheme in both light and heavy traffic conditions.

\appendices
\numberwithin{equation}{section}
\vspace*{-0.3cm}
\section{ A Proof of Lemma \ref{lemma1}}\label{INTER IF}
\vspace*{-0.2cm}
The Laplace Transform of aggregate inter-cell interference can be derived as
\vspace*{-0.1cm}
\begin{align} \label{INTER I 3}
{{\cal L}_{{I_{{\mathop{ int}} er}}}}(s) & 
 \mathop  = \limits^{(a)} {E_{\widehat{\cal Z}_{out} }} \Big[ \mathop \prod \limits_{{u_i} \in \widehat{\cal Z}_{out} } {E_{{P_i}}}{E_{{h_i}}} \left[ {e^{ - s{P_i}{h_i}{{\left\| {{u_i}} \right\|}^{ - \alpha }}}} \Big] \right]
 \nonumber \\
&\mathop  = \limits^{(b)} \exp \Big( { - 2\pi {{{{{\cal T}^m}{{\cal R}^m}\lambda_{Dp} }}}\int_{{{(P/\rho )}^{\frac{1}{\alpha }}}}^\infty  {{E_P}{E_h}\left[1 - {e^{ - sPh{x^{ - \alpha }}}}\right]xdx} } \Big)
\nonumber  \\
& \mathop  = \limits^{(c)} \exp \left( { - 2\pi {{{{{\cal T}^m}{{\cal R}^m}\lambda_{Dp} }}}\int_{{{(P/\rho )}^{^{\frac{1}{\alpha }}}}}^\infty  {{E_P}\left[1 - \frac{1}{{1 + sP{x^{ - \alpha }}}}\right]xdx} } \right)
\nonumber \\
& \mathop  = \limits^{(d)} \exp \Big( { - 2\pi {{{{{\cal T}^m}{{\cal R}^m}\lambda_{Dp} }}}{s^{\frac{2}{\alpha }}}{E_P}[{P^{\frac{2}{\alpha }}}]\int_{{{(s\rho )}^{\frac{{ - 1}}{\alpha }}}}^\infty  {\frac{y}{{1 + {y^\alpha }}}dy} } \Big) ,
\end{align}
where  $s = \gamma_{\rm th} /\rho $,  $E_x[{*}]$ is the expectation with respect to the random variable $x$, (a) follows from independence between ${\lambda_{Dp} }$, $P_i$, and $h_i$, (b) follows from the probability generation functional (PGFL) of the PPP, (c) follows from the Laplace Transform of $h$, and (d) obtained by changing the variables $y = \frac{x}{{{{(SP)}^{\frac{1}{\eta }}}}}$. The $k$th moments of the transmit power is expressed as \cite{HElSawy2014Stochastic}

\vspace*{-0.3cm}
\begin{align}\label{POWER}
{E_P}[{P^k }] = \frac{{\rho _{}^k \gamma (\frac{{k \alpha }}{2} + 1,\pi \lambda_B {{(\frac{{{P}}}{\rho })}^{\frac{2}{\alpha }}})}}{{{{(\pi \lambda_B )}^{\frac{{k \alpha }}{2}}}(1 - {e^{ - \pi \lambda_B {{(\frac{{{P}}}{\rho })}^{\frac{2}{\alpha }}}}})}} ,
\end{align}
where $\gamma (a,b) = \int_0^b {{t^{a - 1}}{e^{ - t}}dt}$ is the lower incomplete gamma function. As mentioned earlier, the transmit power of IoT device is large enough for uplink path-loss inversion, while not violating its own maximum transmit power constraint, and thus 
The moments of the transmit power is obtained as
\vspace*{-0.3cm}
\begin{align}\label{POWER MAX}
{E_P}[P^{\frac{2}{\alpha }} ] = \frac{{\rho _{}^{\frac{2}{\alpha }}}}{{\pi \lambda_B }} .
\end{align}
Substituting \eqref{POWER MAX} into \eqref{INTER I 3}, we derive
the Laplace Transform of aggregate inter-cell interference.

\section{ A Proof of Lemma \ref{lemma2}}\label{Intra IF}

The Laplace Transform of aggregate intra-cell interference is conditioned on known the number of interfering intra-cell IoT devices $Z_B$ given as

\vspace*{-0.3cm}
\begin{align}\label{Intra IF 1}
 {{\cal L}_{{{\cal I}_{{\mathop{ int}} ra}}}}(s) & =
 \sum\limits_{{\rm{n = 0}}}^\infty  {{\mathbb P} \left\{ {{Z_B} = n} \right\}\left( {\left. {E\left[ {{e^{ - sI}}} \right]} \right|{Z_B = n}} \right)}
\nonumber \\
&= {\mathbb P} \left\{ {{Z_B} = 0} \right\} + \sum\limits_{{\rm{n = 1}}}^\infty  {{\mathbb P} \left\{ {{Z_B} = n} \right\} \left({E_{{h_n}}}\left[ {\exp \left( { - s\sum\limits_1^n {\rho {h_n}} } \right)} \right] \right)}
\nonumber \\
&  \mathop  = \limits^{(a)}  {\mathbb P} \left\{ {{Z_B} = 0} \right\}{\rm{ + }}\sum\limits_{{\rm{n = 1}}}^\infty  {{\mathbb P} \left\{ {{Z_B} = n} \right\}{\left( {\frac{{1 }}{{1 + s\rho }}} \right)^n}} ,
\end{align}
where $s = \gamma_{\rm th} /\rho $, ${\mathbb P} \left\{ {{Z_B} = n} \right\}$ is the probability of the number of interfering intra-cell IoT devices ${Z_B}=n$ given in (\ref{BS chosen 4}), and $(a)$ follows from the Laplace Transform of $h_n$. After some mathematical manipulations, we proved (\ref{Intra I 2}) in Lemma \ref{lemma2}.

\bibliographystyle{IEEEtran}

\bibliography{IEEEabrv,RA_bib}

\end{document}